\documentclass[12pt]{article}
\usepackage{amsmath,amssymb,amsthm,mathtools}
\usepackage{comment}
\usepackage[margin=1in]{geometry}
\usepackage{hyperref}
\usepackage{graphicx}
\usepackage{subcaption}
\usepackage{pifont}
\usepackage{arydshln}
\usepackage{placeins}

\newtheorem{theorem}{Theorem}[section]

\newtheorem{corollary}[theorem]{Corollary}
\newtheorem{proposition}[theorem]{Proposition}

\newtheorem{remark}{Remark}[section]

\newcommand{\R}{\mathbb{R}}
\newcommand{\Z}{\mathbb{Z}}
\newcommand{\C}{\mathbb{C}}
\newcommand{\vb}[1]{\mathbf{#1}}
\newcommand{\ket}[1]{|#1\rangle}
\newcommand{\bra}[1]{\langle#1|}
\newcommand{\braket}[2]{\langle#1|#2\rangle}
\newcommand{\ketbra}[2]{|#1\rangle\langle#2|}
\newcommand{\Tr}{\operatorname{Tr}}
\newcommand{\EE}{\mathbb{E}}
\newcommand{\dW}{\mathrm{d}W}
\newcommand{\dd}{\mathrm{d}}
\newcommand{\grad}{\nabla}
\newcommand{\divg}{\nabla\cdot}

\newcommand{\lapl}{\nabla^2}

\title{Why Does Classical Turbulence Obey an Area Law?}
\author{Wael Itani$^{a,b}$\footnote{\href{mailto:wi07@aub.edu.lb}{wi07@aub.edu.lb}}\\[6pt]
$^{a}$Maroun Semaan Faculty of Engineering \& Architecture,\\[3pt]
$^{b}$Center for Advanced Mathematical Sciences,\\
American University of Beirut, Beirut 1107 2020, Lebanon}
\date{\today}

\begin{document}
\maketitle

\begin{abstract}
In incompressible flow the viscous force is solenoidal, whereas the Madelung transform of a spinless Schr\"odinger equation produces only gradient forces. The two are orthogonal, so viscosity cannot arise from Hamiltonian quantum mechanics alone; an open quantum treatment is required. Reducing the $N$-body density matrix to its one-body component and closing the dynamics via Born-Markov yields Lindblad jump operators with $k^2$ scattering rates, which we unravel via quantum state diffusion (QSD) into a norm-preserving stochastic nonlinear Schr\"odinger equation. Dissipation and stochastic forcing are not separate ingredients: both come from the same Lindblad operators, and their amplitudes are locked by the QSD structure. The Madelung transform of this equation, under incompressibility, gives a stochastic Navier-Stokes equation whose viscosity is set by the mean free path and whose noise correlator satisfies the fluctuation-dissipation relation by construction, in agreement with the Landau-Lifshitz framework. The recovery is conditional: the viscous identification holds at the ensemble level via the vortex decomposition of the velocity field; the single-trajectory identification remains open. The zeros of the wavefunction carry quantised circulation; their codimension-2 topology yields the Migdal area law for circulation statistics~\cite{migdal1994} under a Poisson assumption, here through a different mechanism than the loop-functional saddle point and verified numerically even in the quantum regime where the de~Broglie length exceeds the Kolmogorov scale.
\end{abstract}

\newpage
\tableofcontents
\newpage

\section{Introduction}\label{sec:introduction}

The hydrodynamic interpretation of quantum mechanics~\cite{madelung1927} recasts the Schr\"odinger equation as a continuity equation for the probability density and a compressible Euler-type momentum equation augmented by the quantum potential. It has long served as a conceptual bridge between quantum and classical mechanics~\cite{bohm1952a,bohm1952b,sonego1991}, but its implications for classical fluid dynamics, and turbulence in particular, remain underexplored. We show that introducing dissipation into the Madelung framework via the Born-Markov approximation recovers the Navier-Stokes equations with a well-defined kinematic viscosity, under incompressibility and at the ensemble level via the vortex decomposition of the velocity field. The codimension-2 topology of the wavefunction zeros then yields the Migdal area law for circulation statistics~\cite{migdal1994,migdal2019,migdal2024,migdal2024b} via a topological mechanism distinct from the loop-functional saddle point on which the law was originally derived.

The Madelung transform $\psi = \sqrt{\rho}\,e^{iS/\hbar}$ identifies $\rho$ with density and $\grad S/m$ with velocity, yielding an inviscid, irrotational flow~\cite{madelung1927,wyatt2005,takabayasi1952,schonberg1954}. Previous attempts to add viscosity span stochastic mechanics~\cite{nelson1966,nelson1967,nelson1985} (placed on variational foundations~\cite{yasue1981,guerra1983}, with quantisation subtleties~\cite{wallstrom1989}), non-Hermitian Schr\"odinger equations~\cite{succi2023}, spinor formulations~\cite{meng2023,yang2024}, and quantum computational methods~\cite{yepez2001,gaitan2020,li2025,sanavio2024,itani2024qalb,itani2021toolbox,succi2023where,zhu2025}. Within the single-component (spinless) setting, every deterministic route fails: complex kinetic coefficients~\cite{succi2023} introduce spurious mass diffusion and break norm conservation; logarithmic nonlinearities~\cite{doebner1992} give uniform friction rather than scale-selective dissipation; and spinor models~\cite{meng2023,yang2024} step outside the single-particle Lindblad framework. The reason is structural (\S\ref{sec:gradient-obstruction}): for incompressible flow $\nu\lapl\vb{u}$ is solenoidal, while Madelung forces are gradients, and the two are orthogonal under the Helmholtz decomposition. Related work has examined the quantum potential as a Korteweg capillary stress~\cite{krishnaswami2020}, its connection to Fisher information and barotropic pressure~\cite{heifetz2015}, geometric reformulations of the Madelung transform~\cite{reddiger2017}, and vortex dynamics in the perfect-liquid Madelung equations~\cite{sorokin2001}. The present work resolves the obstruction within the spinless, position-basis framework by unravelling the Lindblad master equation via QSD, so that dissipation and stochastic forcing emerge jointly from the same operators.

The construction proceeds from quantum to classical. Starting from the $N$-body Schr\"odinger equation, we trace out $N-1$ particles to obtain the one-body density matrix, derive Lindblad jump operators with $k^2$ rates via Born-Markov closure and Fermi's golden rule, unravel via QSD into a stochastic NLS whose trajectories $\psi(\vb{x},t)$ are scalar fields on $\mathbb{R}^3$, and apply the Madelung transform to conditionally recover the Navier-Stokes equations. To the extent that this recovery holds, standard turbulence results (the K\'arm\'an-Howarth equation, the four-fifths law, the $k^{-5/3}$ spectrum~\cite{karman1938,kolmogorov1941a,kolmogorov1941b}) follow as corollaries of Navier-Stokes, not as independent contributions of this work.

The defining feature of the construction is that dissipation enters through stochastic forcing: the Lindblad operators that generate viscous damping also generate the noise, and the two cannot be tuned independently. The fluctuation-dissipation relation is therefore a structural consequence of the QSD unravelling rather than an imposed condition, and the wavefunction representation is preserved throughout. The construction may inform future closure modelling, although no closure model is developed here~\cite{batchelor1953,monin1975}. The need for stochastic forcing in reduced turbulence models has been argued independently~\cite{freitas2025}: deterministic closures fail to capture uncertainty growth in shell models at high Reynolds number.

Section~\ref{sec:methodology} develops the framework: the gradient-solenoidal obstruction, the Lindblad derivation, QSD unravelling, and Navier-Stokes recovery. Section~\ref{sec:results} covers the classical limit and the area law for circulation. Section~\ref{sec:discussion} addresses robustness, sensitivity, and open problems. Section~\ref{sec:conclusion} summarises what is proven, assumed, and open. Numerical methods and convergence are deferred to Appendix~\ref{app:numerical}.

\subsection{Background}\label{sec:preliminaries}

We recall the incompressible Navier-Stokes equations~\cite{batchelor1967,chorin1993,temam2001,pope2000}:
\begin{align}
\frac{\partial \vb{u}}{\partial t} + (\vb{u} \cdot \grad)\vb{u} &= -\frac{1}{\rho_0}\grad p + \nu \lapl \vb{u}, \label{eq:ns-intro} \\
\divg \vb{u} &= 0, \label{eq:incomp-intro}
\end{align}
with Taylor-scale Reynolds number $Re_\lambda = u_{\mathrm{rms}}\,\lambda/\nu$, where the Taylor microscale is
\begin{equation}\label{eq:taylor-microscale}
\lambda = \sqrt{15\nu\,\langle u^2 \rangle/\varepsilon}
\end{equation}
(the statistical framework dates to~\cite{taylor1935}).

The Madelung transform~\cite{madelung1927,khesin2019} writes $\psi = \sqrt{\rho}\,e^{iS/\hbar}$, with density $\rho = |\psi|^2$, irrotational velocity $\vb{v} = \grad S/m$, and quantum potential $Q = -(\hbar^2/2m)\,\lapl\sqrt{\rho}/\sqrt{\rho}$. The obstruction to viscous dissipation under any Hamiltonian Schr\"odinger equation (self-adjoint $\hat{H}$, norm-preserving) is developed in \S\ref{sec:gradient-obstruction}; it extends to non-Hermitian modifications~\cite{succi2023}, deterministic nonlinear extensions~\cite{doebner1992}, and multi-component spinor formulations~\cite{meng2023,yang2024}, motivating the stochastic extension developed here.

Standard turbulence quantities referenced below are the Kolmogorov spectrum $E(k)=C_K\varepsilon^{2/3}k^{-5/3}$~\cite{sreenivasan1995}, the dissipation scale $\eta=(\nu^3/\varepsilon)^{1/4}$, the K\'arm\'an-Howarth equation~\cite{karman1938,batchelor1953} with its exact corollary $S_3(r)=-\tfrac{4}{5}\varepsilon r$~\cite{kolmogorov1941a,frisch1995,antonia2006}, the dissipative-anomaly conjecture of Onsager~\cite{onsager1945,onsager1949,eyink2006,eyink2024}, and the refined and anomalous-scaling extensions~\cite{kolmogorov1962,sheleveque1994,sreenivasan1997}.

\section{Methodology}\label{sec:methodology}

\subsection{Overview}\label{sec:framework}

\subsubsection{Derivation Strategy}

Vorticity in incompressible flows is carried by codimension-2 objects: point vortices in 2D and vortex filaments in 3D. As shown in \S\ref{sec:gradient-obstruction}, viscosity cannot arise from any Hamiltonian Schr\"odinger equation via the Madelung transform, which forces an open-system description through the Lindblad master equation.

We derive the $k^2$ scaling of the Lindblad jump operators from microscopic $N$-body scattering theory (\S\ref{sec:lindblad-k2}). Of the available unravellings into stochastic pure-state evolutions, we adopt quantum state diffusion (QSD): it yields a diffusive stochastic NLS with continuous trajectories that preserves the norm (\S\ref{sec:norm-preservation}), recovers the Lindblad master equation upon ensemble averaging (\S\ref{sec:ensemble-recovery}), and reproduces the incompressible Navier-Stokes equations under Madelung transform and incompressibility (\S\ref{sec:ns-recovery}). The momentum equation that results is a stochastic Navier-Stokes equation whose noise correlator satisfies the fluctuation-dissipation relation (\S\ref{sec:fdr}).

The remaining sections derive the area law for circulation statistics (\S\ref{sec:area-law}) from the codimension-2 topology of the wavefunction zeros and verify it numerically, including in the quantum regime $\mathrm{Kn}_q \gtrsim 1$.

Two physical ideas underlie the construction. First, viscosity emerges from decoherence: a Born-Markov trace over unresolved degrees of freedom in the $N$-body system produces dissipative dynamics, and a dimensional argument links the kinematic viscosity to the entanglement entropy production rate (\S\ref{sec:lindblad-k2}). Second, the zeros of the wavefunction are codimension-2 vortex cores, providing the topological framework for circulation quantisation and the area law.

\subsubsection{Natural Units}\label{sec:natural-units}

Three dimensional normalisations are used throughout. We set $\hbar = m = 1$, fixing the units of length and time and matching the simulation convention (Appendix~\ref{app:numerical}). In these units the circulation quantum is $\Gamma_0 = 2\pi$ and the de Broglie length is $\lambda_\mathrm{dB} = 1/v_\mathrm{rms}$. A third normalisation, the thermal energy density $k_BT/\rho$, appears in the fluctuation-dissipation ratio: matching the QSD value $S_{\vb{k}}/\gamma_{\vb{k}} = 2$ to the Landau-Lifshitz form $S_{\vb{k}} = 2\gamma_{\vb{k}}\cdot k_BT/\rho$ (\S\ref{sec:fdr}) requires $k_BT/\rho = 1$. Physical values are recovered by reinstating $\hbar_\mathrm{SI}$, $m_\mathrm{phys}$, and the equation of state. For water the dimensionless viscosity is $\nu_* = \nu_\mathrm{SI}\,m_\mathrm{phys}/\hbar_\mathrm{SI} \approx 283$ (Eq.~\eqref{eq:nu-star}).

\subsubsection{Dimensional Structure of the Correspondence}

The vortex-quantum correspondence hinges on the codimension of the vorticity-carrying structures (Table~\ref{tab:dimensional}). In 2D (point vortices) and 3D (vortex filaments), the carriers have codimension 2 and admit a geometric potential---the winding angle $\theta$ or the solid angle $\Omega$---with an associated topological invariant; the framework is complete in both cases. For codimension-3 point particles in 3D, the geometric bridge fails: no linking number, no area rule, and the Hamiltonian structure requires spinor wavefunctions. The analytical framework therefore applies to both 2D and 3D.

\begin{table}[h]
\centering
\begin{tabular}{lccc}
\hline
\textbf{Property} & \textbf{2D (Point Vortices)} & \textbf{3D (Filaments)} & \textbf{3D (Point Particles)} \\
\hline
Vorticity carrier & Points (codim-2) & Lines (codim-2) & Points (codim-3) \\
Geometric object & Winding angle $\theta$ & Solid angle $\Omega$ & Dipole (degenerate) \\
Topological invariant & Winding number & Linking number & None \\
Hamiltonian structure & Yes (Kirchhoff) & Yes (with constraints) & No (with stretching) \\
Wavefunction type & Spinless & Spinless & Requires spinor \\
Area rule & Holds (winding) & Holds (solid angle) & Breaks down \\
\hline
\end{tabular}
\caption{Dimensional dependence of the vortex-quantum correspondence. The framework is complete when vorticity carriers have codimension 2. The codimension-2 entries (2D points and 3D filaments) have been independently validated computationally~\cite{zhu2025}: vortex filaments in 3D turbulence can be extracted as the zero set of a complex scalar field using a variational quantum eigensolver, correctly capturing knotted and tangled filament topologies (septafoil, cinquefoil, and turbulent configurations).}
\label{tab:dimensional}
\end{table}

The constructive encoding of~\cite{zhu2025} demonstrates that a sufficiently regular divergence-free velocity field with quantised circulations~\cite{saffman1992} can be represented as the Madelung velocity of a complex scalar field $\psi$, with vortex filaments as the zero set $\mathcal{Z} = \{\vb{x} : \psi(\vb{x}) = 0\}$; the method reformulates the encoding as a Hermitian eigenvalue problem solved via a variational quantum eigensolver. The geometric connection between the wavefunction zeros and the classical vortex potentials (the winding angle $\theta$ in 2D and the solid angle $\Omega$ in 3D) is developed in \S\ref{sec:area-law}.

\subsection{The Need for Open Quantum Dynamics}\label{sec:gradient-obstruction}

Throughout this work, $\psi(\vb{x},t) = \braket{\vb{x}}{\psi(t)}$ is a single-component scalar wavefunction in the position basis---one complex number per point, no spin. It is \emph{not} the $N$-body wavefunction $\Psi(\vb{x}_1,\ldots,\vb{x}_N,t)$ on $3N$-dimensional configuration space, which cannot be Madelung-transformed to a fluid on $\mathbb{R}^3$; the construction of $\psi$ as a scalar field derived from the $N$-body dynamics is given in \S\ref{sec:reduction}. The obstruction below applies to any such $\psi$ regardless of origin, and its resolution turns on the \emph{stochastic} structure of the evolution rather than any modification of the Schr\"odinger operator on a single realisation. The position basis is forced by the Madelung transform: density $\rho = |\psi|^2$ and velocity $\vb{v} = \grad S/m$ are local in position space, and the kinetic operator $-\hbar^2\lapl/(2m)$ becomes the advection term $(\vb{v}\cdot\grad)\vb{v}$ plus quantum pressure $\grad Q/m$. Applying the Madelung transform to $i\hbar\partial_t\psi = -(\hbar^2/2m)\lapl\psi + V\psi$ yields the continuity equation $\partial_t\rho + \divg(\rho\vb{v}) = 0$ and the momentum equation
\begin{equation}\label{eq:madelung-hj}
\frac{\partial\vb{v}}{\partial t} + (\vb{v}\cdot\grad)\vb{v}
= -\frac{1}{m}\grad(V + Q),
\end{equation}
where $Q = -(\hbar^2/2m)\,\lapl\sqrt{\rho}/\sqrt{\rho}$ is the quantum pressure~\cite{madelung1927,bohm1952a}. Wherever $\psi \neq 0$ the velocity $\vb{v} = \grad S/m$ is irrotational and the right-hand side of~\eqref{eq:madelung-hj} is a pure gradient. For incompressible flow the viscous term $\nu\lapl\vb{v}$ is solenoidal, since $\divg(\nu\lapl\vb{v}) = \nu\lapl(\divg\vb{v}) = 0$. A field that is both a gradient and divergence-free must vanish: $\vb{f} = \grad\phi$ with $\divg\vb{f} = 0$ implies $\lapl\phi = 0$, whose only solution on a periodic or decaying domain is constant, so $\vb{f} = 0$. Therefore:

\emph{No conservative Schr\"odinger equation can produce $\nu\lapl\vb{v}$ via the Madelung transform in the smooth region where $\psi \neq 0$.}

At the zeros of $\psi$ the Madelung representation is singular: $S$ is multi-valued and $\vb{v}$ diverges as $\hbar/(mr_\perp)$, where $r_\perp$ is the distance to the zero set. In 2D the zeros are isolated points with vorticity $\vb{\omega} = (2\pi n\hbar/m)\,\delta^2(\vb{x} - \vb{x}_0)\,\hat{\vb{z}}$; in 3D they form filaments with $\vb{\omega} = (2\pi n\hbar/m)\,\delta^2(\vb{x}_\perp - \vb{x}_{0,\perp})\,\hat{\vb{t}}$, where $\hat{\vb{t}}$ is the unit tangent and $\delta^2$ acts in the transverse plane.

The resolution is to extend the conservative Schr\"odinger dynamics to an open quantum system, combining three ingredients: (i) deterministic mode-by-mode damping at rate $\nu k^2$ (derived from $N$-body scattering in \S\ref{sec:lindblad-k2}), which alone would violate norm conservation~\cite{succi2023} and introduce spurious mass diffusion~\cite{lighthill1963}; (ii) a nonlinear state-dependent correction that restores norm preservation trajectory by trajectory (Theorem~\ref{thm:norm}); and (iii) stochastic noise from the QSD unravelling that nucleates and annihilates vortex-antivortex pairs by transversality~\cite{guillemin1974}, populating the singular set $\mathcal{Z} = \{\vb{x} : \psi(\vb{x}) = 0\}$ whose collective dynamics carries the solenoidal vorticity that the orthogonality argument forbids in the smooth region. The explicit dynamics are derived in \S\ref{sec:qsd-derivation}--\ref{sec:ns-recovery}.

This result applies to wavefunctions in the position basis with no internal degrees of freedom, for which $\vb{v} = \grad S/m$ is necessarily irrotational away from zeros. Real fluid molecules are not spinless: they carry electronic, rotational, and vibrational modes. The $N$-body bosonic model used in the Lindblad derivation (\S\ref{sec:lindblad-k2}), with contact interactions and $s$-wave scattering, is a dilute Bose gas, not a classical molecular fluid. However, the \emph{result} of the derivation, Lindblad operators with $\Gamma(\vb{k}) = 2\nu k^2$, is universal: it follows from momentum conservation, isotropy, and analyticity of the relaxation $\Sigma(E)$, regardless of the microscopic model (see \S\ref{sec:discussion}). The bosonic model is a convenient route to the operator structure, not a claim that the fluid is a Bose-Einstein condensate. Similarly, the ``no internal degrees of freedom'' refers to the reduced single-particle description after tracing out the environment, not to the physical molecule: in classical kinetic theory, shear viscosity depends only on the translational distribution function $f(\vb{x}, \vb{p}, t)$, and internal molecular modes enter only through the bulk viscosity $\zeta$, which vanishes for incompressible flow ($\divg\vb{u} = 0$)~\cite{lifshitz1981}. The position-basis, single-component wavefunction captures precisely this translational sector.

Adding internal degrees of freedom (spin, polarisation) promotes $\psi$ to a multi-component field and changes the picture: the two-component spinor approach~\cite{meng2023,yang2024} uses $\psi = (\psi_1, \psi_2)^T$ whose velocity is not a pure gradient, circumventing the obstruction but requiring a multi-component wavefunction. Within the single-component framework, the non-Hermitian approach~\cite{succi2023} modifies the Schr\"odinger equation by adding a non-Hermitian term ($i\hbar\nu\lapl$) to the kinetic operator, producing the correct viscous term at the cost of norm violation and spurious mass diffusion. The computational approach of~\cite{zhu2025} uses the same position-basis representation, with vortex filaments as zeros of $\psi$.

\paragraph{Hydrodynamic variables vs.\ the wavefunction: the Wallstrom objection.}\label{sec:wallstrom}
A parallel objection to working directly with the Madelung hydrodynamic variables $(\rho,\vb{v})$ rather than $\psi$ is due to Wallstrom~\cite{wallstrom1994}: the Madelung hydrodynamic equations are \emph{not} equivalent to the Schr\"odinger equation---the fluid equations admit solutions with non-quantised circulations that do not correspond to any single-valued $\psi$. Keeping $\psi$ as the fundamental object makes circulation quantisation $\Gamma = n\cdot 2\pi\hbar/m$ automatic~\cite{guillemin1974,zhu2025}; in the classical limit $\hbar/m \to 0$, the allowed circulations become dense in $\mathbb{R}$. The dimensionless parameter controlling this limit is the \emph{quantum Knudsen number}
\begin{equation}\label{eq:knudsen-first}
\mathrm{Kn}_q \;\equiv\; \frac{\lambda_{\mathrm{dB}}}{\eta}, \qquad \lambda_{\mathrm{dB}} = \frac{\hbar}{m\,v_{\mathrm{rms}}},\quad \eta = \left(\frac{\nu^3}{\varepsilon}\right)^{1/4},
\end{equation}
whose consequences are developed in \S\ref{sec:knudsen}.

\subsection{The $k^2$-Dependent Noise Forcing}\label{sec:lindblad-k2}

\subsubsection{The $N$-Body Starting Point}

The spinless assumption of \S\ref{sec:gradient-obstruction} implies bosonic exchange statistics by the spin-statistics theorem. We therefore consider $N$ identical bosons on a periodic domain $\mathbb{T}^3 = [0,L]^3$ interacting via pairwise contact forces (the dilute gas limit, where three-body and higher collisions are negligible). The Hamiltonian is:
\begin{equation}\label{eq:H_N}
H_N = \sum_{i=1}^{N} \frac{-\hbar^2}{2m}\lapl_i
+ \sum_{i<j} U(\vb{x}_i - \vb{x}_j) + \sum_{i} V_{\text{ext}}(\vb{x}_i),
\end{equation}
where $U(\vb{r}) = g_0\,\delta^3(\vb{r})$ is the contact interaction (or a regularised
version thereof), with coupling $g_0 = 4\pi\hbar^2 a_s/m$ and $a_s$ the $s$-wave
scattering length.

In the mean-field limit, replacing the $N$-body wavefunction by a product state $\Psi(\vb{x}_1,\ldots,\vb{x}_N) \approx \prod_i \psi(\vb{x}_i)$ yields the Gross-Pitaevskii equation (GPE), $i\hbar\partial_t\psi = (-\hbar^2\nabla^2/2m + g|\psi|^2)\psi$, which is the standard model for superfluids and Bose-Einstein condensates at zero temperature~\cite{pitaevskii2003}. The GPE is conservative (energy-preserving) and supports quantised vortices, but contains no dissipation: its Madelung transform yields the compressible Euler equations with a quantum pressure term, not the Navier-Stokes equations. The Lindblad framework developed below goes beyond the mean-field approximation by retaining the effect of the environmental degrees of freedom that the product-state ansatz discards, and it is precisely these traced-out correlations that generate viscosity.

\subsubsection{Reduced One-Body Dynamics: From $N$ Bodies to One}\label{sec:reduction}

The goal of this subsection is to derive a closed equation for the \emph{single-particle reduced density matrix} $\hat\rho_1(t)$ from the $N$-body dynamics, by tracing out the other $N-1$ particles. The reduction is performed at the density-matrix level, not the wavefunction level, for two reasons. First, the $N$-body wavefunction $\Psi(\vb{x}_1,\ldots,\vb{x}_N,t)$ lives on the $3N$-dimensional configuration space and is not a scalar field on $\mathbb{R}^3$; a single-argument scalar field $\psi(\vb{x},t)$ suitable for the Madelung transform cannot be obtained by restricting $\Psi$ to one particle. Second, the bath of $N-1$ particles enters as a mixed (thermal) state, which has no pure-state representative in the $N$-body Hilbert space; only the reduced density matrix admits a clean projection. We therefore use the Nakajima-Zwanzig projection~\cite{nakajima1958,zwanzig1960,breuer2002}, the density-matrix counterpart of the pure-state Feshbach construction~\cite{feshbach1958} that is traditionally used to derive Fermi golden rule rates for single-particle scattering; the two routes yield the same rates, but the density-matrix version is the correct level of description here.

\paragraph{Partial trace}

Let $\hat\rho_N(t)$ denote the $N$-body density matrix (pure or mixed). The single-particle reduced density matrix is obtained by tracing out $N-1$ particles:
\begin{equation}\label{eq:rho1-def}
\hat\rho_1(t) \equiv N\,\Tr_{2,\ldots,N}\,\hat\rho_N(t), \qquad \Tr\hat\rho_1 = N.
\end{equation}
For identical bosons, $\hat\rho_N$ is symmetric under particle exchange and the choice of which particle to label ``1'' is arbitrary. The object $\hat\rho_1$ is a one-body operator on the single-particle Hilbert space, with position-basis kernel $\langle\vb{x}|\hat\rho_1|\vb{x}'\rangle$ defined on $\mathbb{R}^3\times\mathbb{R}^3$.

\paragraph{Nakajima-Zwanzig projection}

Define the projection superoperator $\mathcal{P}$, acting on density matrices, that factorises the one-body component from a thermal bath:
\begin{equation}\label{eq:P-superop}
\mathcal{P}\,\hat\rho_N \equiv \hat\rho_1(t)\otimes \hat\rho_B^{(N-1)},
\end{equation}
where $\hat\rho_B^{(N-1)}$ is the assumed-thermal density matrix of the $N-1$ unobserved particles. Let $\mathcal{Q} = \mathbb{I} - \mathcal{P}$. Inserting $\mathbb{I} = \mathcal{P}+\mathcal{Q}$ into the Liouville-von Neumann equation $i\hbar\,\partial_t\hat\rho_N = [\hat H_N,\hat\rho_N]$ and formally solving for $\mathcal{Q}\hat\rho_N$ yields a closed equation for $\hat\rho_1$~\cite{nakajima1958,zwanzig1960,breuer2002}:
\begin{equation}\label{eq:nz}
\partial_t\hat\rho_1(t) = -\frac{i}{\hbar}[\hat H_{\text{eff}},\hat\rho_1(t)] + \int_0^t\!\dd\tau\;\mathcal{K}(t-\tau)\,\hat\rho_1(\tau),
\end{equation}
where $\hat H_{\text{eff}}$ is the bare single-particle Hamiltonian (kinetic energy plus any mean-field potential) and $\mathcal{K}$ is the memory kernel encoding all collisions with the bath. Equation~\eqref{eq:nz} is exact; no approximation has yet been made. Its frequency-domain self-energy, denoted $\Sigma(E)$, collects all scattering events weighted by an energy denominator and, below, yields the Fermi golden rule rates~\cite{fetter1971}.

\paragraph{The reduction chain}

The full reduction from $N$-body dynamics to the scalar wavefunction used in the Madelung transform is
\begin{equation}\label{eq:reduction-chain}
\hat\rho_N \;\longrightarrow\; \hat\rho_1 \;\longrightarrow\; \mathcal{L}_{\text{Lindblad}} \;\longrightarrow\; \ket{\psi(t)} \;\longrightarrow\; \psi(\vb{x},t) \;\longrightarrow\; (\rho,\vb{v}),
\end{equation}
proceeding by partial trace over particles $2,\ldots,N$ (Eq.~\eqref{eq:rho1-def}); Born-Markov reduction (\S\ref{sec:born-markov}); QSD unravelling into stochastic pure-state trajectories satisfying $\hat\rho_1(t) = \EE[\ketbra{\psi}{\psi}]$ (\S\ref{sec:qsd-derivation}); evaluation in the position basis; and the Madelung transform (\S\ref{sec:ns-recovery}).
The scalar $\psi(\vb{x},t)$ is thus \emph{not} a restriction of the $N$-body $\Psi$ to one particle (which would be dimensionally inconsistent); it is an auxiliary stochastic unravelling trajectory of $\hat\rho_1$, whose ensemble average reproduces the physical reduced density matrix. This interpretation is what makes the Madelung transform of a single-argument scalar field well-defined in the many-body context.

\subsubsection{Relaxation and Scattering Rate}

$\Sigma(E)$ encodes the cumulative effect of collisions with the bath on the tagged particle. The tagged particle in mode $\ket{\vb{k}}$ collides with a bath particle and is scattered into a different mode. $\Sigma(E)$ sums over all such scattering events, weighted by an energy denominator $(E - \epsilon_{\vb{k}'} - \epsilon_{\vb{q}'} + i\eta)^{-1}$ that enforces approximate energy conservation.

Its real and imaginary parts play distinct roles. $\mathrm{Re}\,\Sigma$ shifts the energy levels of the tagged particle (absorbed into $\hat H_{\text{eff}}$; irrelevant for dissipation). $\mathrm{Im}\,\Sigma$ gives the gross out-scattering rate from mode $\ket{\vb{k}}$,
\begin{equation}\label{eq:optical}
\Gamma_{\text{out}}(\vb{k}) = -\frac{2}{\hbar}\,\mathrm{Im}\,\Sigma(\vb{k}, \epsilon_{\vb{k}}),
\end{equation}
i.e.\ Fermi's golden rule. The thermal average of $\Gamma_{\text{out}}$ gives the $k$-independent collision rate $\Gamma_{\text{coll}}$ (\S\ref{sec:collision-to-viscosity}); the net mode relaxation rate, which subtracts back-scattering and underlies the Lindblad dissipator, is treated separately in \S\ref{sec:visc-to-k2}.

\paragraph{Interaction Matrix Elements}

The contact interaction $U(\vb{r}) = g_0\,\delta^3(\vb{r})$ is local in position space; in the momentum basis $\ket{\vb{k}} = L^{-3/2}e^{i\vb{k}\cdot\vb{x}}$ (plane waves on the periodic domain), the Fourier transform of the delta function is a constant, so the matrix element reduces to:
\begin{equation}
\bra{\vb{k}_1',\vb{k}_2'} U \ket{\vb{k}_1,\vb{k}_2}
= \frac{g_0}{L^3}\,\delta_{\vb{k}_1'+\vb{k}_2',\,\vb{k}_1+\vb{k}_2}.
\end{equation}
The only constraint is momentum conservation ($\vb{k}_1 + \vb{k}_2 = \vb{k}_1' + \vb{k}_2'$); all scattering angles are equally likely ($s$-wave).

\paragraph{Second-Order Relaxation}

Expanding the Nakajima-Zwanzig memory kernel $\mathcal{K}$ to second order in the interaction $g_0$ and transforming to the frequency domain yields the self-energy $\Sigma(E) \sim \mathcal{L}_{SB}\,(E - \mathcal{L}_B + i\eta)^{-1}\,\mathcal{L}_{SB}$, where $\mathcal{L}_{SB}$ and $\mathcal{L}_B$ are the interaction and bath Liouvillians. Substituting the matrix elements and reading right to left: (i)~the first interaction scatters the tagged particle from mode $\ket{\vb{k}}$ into a different mode $\ket{\vb{k}'}$ via collision with a bath particle in mode $\ket{\vb{q}}$ (one factor of $g_0$); (ii)~the propagator carries energy denominator $E - \epsilon_{\vb{k}'} - \epsilon_{\vb{q}'} + i\eta$, small when energy is conserved (strong effect) and large otherwise; (iii)~the second interaction closes the diagram (a second factor of $g_0$), because the correction to mode $\ket{\vb{k}}$ is second-order in interaction strength. Summing over all bath particles $\vb{q}$ weighted by their thermal occupation $f(\vb{q})$:
\begin{equation}\label{eq:sigma2}
\Sigma^{(2)}(\vb{k}, E) = \frac{n_0 g_0^2}{L^3}
\sum_{\vb{q}} \frac{f(\vb{q})}{E - \epsilon_{\vb{k}'} - \epsilon_{\vb{q}'} + i\eta},
\end{equation}
where $n_0 = N/L^3$ is the particle density, $f(\vb{q})$ is the occupation of bath mode $\vb{q}$, and $\epsilon_{\vb{k}} = \hbar^2 k^2/(2m)$ is the kinetic energy.

\paragraph{The Imaginary Part: Fermi's Golden Rule}

Applying~\eqref{eq:optical} to the second-order expression~\eqref{eq:sigma2}, the imaginary part of the energy denominator extracts the energy-conserving contributions via $\text{Im}(E+i\eta)^{-1}=-\pi\delta(E)$. The resulting delta function $\delta(\epsilon_{\vb{k}}+\epsilon_{\vb{q}}-\epsilon_{\vb{k}'}-\epsilon_{\vb{q}'})$ enforces elastic scattering, giving Fermi's golden rule for the gross out-scattering rate from mode $\ket{\vb{k}}$:
\begin{equation}\label{eq:fermi}
\Gamma_{\text{out}}(\vb{k}) = \frac{2\pi}{\hbar}\frac{n_0 g_0^2}{L^3}
\sum_{\vb{q}} f(\vb{q})\,
\delta(\epsilon_{\vb{k}} + \epsilon_{\vb{q}} - \epsilon_{\vb{k}'} - \epsilon_{\vb{q}'}).
\end{equation}

\subsubsection{From Collision Rate to Viscosity}\label{sec:collision-to-viscosity}

For $s$-wave ($\ell=0$) scattering with contact interactions, the differential cross-section is isotropic: $\dd\sigma/\dd\Omega = a_s^2$. Not all collisions are equally effective at transferring momentum: a forward collision ($\theta \approx 0$) deflects the particle negligibly, while a backward collision ($\theta \approx \pi$) reverses its momentum. The \emph{transport cross-section} $\sigma_{\text{tr}}$ accounts for this by weighting each scattering angle by the fractional momentum transfer $(1 - \cos\theta)$:
\begin{equation}
\sigma_{\text{tr}} = \int (1 - \cos\theta)\,\frac{\dd\sigma}{\dd\Omega}\,\dd\Omega
= a_s^2 \int (1 - \cos\theta)\,\sin\theta\,\dd\theta\,\dd\phi = 4\pi a_s^2,
\end{equation}
where the $\cos\theta$ term integrates to zero over the full sphere. Substituting $a_s = m g_0/(4\pi\hbar^2)$ from the coupling $g_0 = 4\pi\hbar^2 a_s/m$ gives $\sigma_{\text{tr}} = m^2 g_0^2/(4\pi\hbar^4)$.

We now evaluate the thermal average of the Fermi golden rule sum~\eqref{eq:fermi} for small system momentum $k \ll k_{\text{th}} \equiv m\bar{v}/\hbar$. We begin by converting to the continuum and center-of-mass frame. Replace $L^{-3}\sum_{\vb{q}} \to (2\pi)^{-3}\int\dd^3 q$ and assume a Maxwell-Boltzmann distribution~\cite{lifshitz1981} for the bath: $f(\vb{q}) = n_0(2\pi m k_BT/\hbar^2)^{-3/2}\exp(-\hbar^2 q^2/(2mk_BT))$.

In the center-of-mass frame, define the relative wavevector $\vb{g} = \vb{k} - \vb{q}$ and center-of-mass wavevector $\vb{K} = \vb{k} + \vb{q}$. Energy conservation requires $|\vb{k}|^2 + |\vb{q}|^2 = |\vb{k}'|^2 + |\vb{q}'|^2$, which in the center-of-mass frame reduces to $|\vb{g}| = |\vb{g}'|$ (elastic scattering). For contact interactions, the differential cross-section is isotropic: $\dd\sigma/\dd\Omega = a_s^2$.

For $k \ll k_{\text{th}}$, the system particle is slow compared to the thermal bath: $|\vb{v}_{\text{rel}}| = \hbar|\vb{k} - \vb{q}|/m \approx \hbar q/m$ to leading order. The collision rate becomes $k$-independent:
\begin{equation}
\Gamma_{\text{coll}} = n_0\sigma_{\text{tr}}\bar{v},
\end{equation}
where $n_0 = N/L^3$ is the particle number density and $\bar{v} = \sqrt{8k_BT/(\pi m)}$ is the mean thermal speed. The product $n_0\sigma_{\text{tr}}$ has dimensions of inverse length: it is the inverse of the mean free path $\ell_{\text{mfp}} = 1/(n_0\sigma_{\text{tr}})$, the average distance a particle travels between momentum-transferring collisions. The standard Chapman-Enskog analysis~\cite{lifshitz1981,gardiner2009} gives the kinematic viscosity as:
\begin{equation}
\nu = \frac{1}{3}\bar{v}\,\ell_{\text{mfp}},
\end{equation}
with dimensions $[\bar{v}][\ell_{\text{mfp}}] = (\text{m/s})(\text{m}) = \text{m}^2/\text{s}$, consistent with kinematic viscosity.

\subsubsection{From Viscosity to the $k^2$ Lindblad Rate}\label{sec:visc-to-k2}

The Fermi golden rule yields a $k$-independent collision rate $\Gamma_{\text{coll}}$; the $k^2$ scaling of the Lindblad rate is fixed separately, by matching to the hydrodynamic damping that the master equation must reproduce. The collision rate counts how often a particle is scattered (even a particle at rest is scattered by bath particles); the net mode relaxation rate $\Gamma(\vb k)$ instead measures how fast the population of mode $\vb k$ decays once scattering both \emph{out of} and \emph{back into} the mode is included. Detailed balance forces $\Gamma(\vb 0)=0$, and isotropy plus analyticity of $\Sigma(E)$ at small $k$ give $\Gamma(\vb k)=c\,k^2+O(k^4)$ for any short-range isotropic interaction~\cite{lifshitz1981}.

The coefficient $c$ is fixed by demanding that the Lindblad dissipator reproduce viscous damping. Matching the anti-commutator to $-\nu\lapl$,
\begin{equation}\label{eq:gamma-k2}
\tfrac{1}{2}\sum_{\vb{k}}\Gamma(\vb{k})\ket{\vb{k}}\bra{\vb{k}} = -\nu\lapl
\quad\Longrightarrow\quad
\Gamma(\vb{k}) = 2\nu\,k^2,
\end{equation}
where the factor of 2 is the Lindblad convention. The value of $\nu$ is set independently by the Chapman-Enskog collision integral~\cite{lifshitz1981},
\begin{equation}
\nu = \tfrac{1}{3}\bar v\,\ell_{\text{mfp}} = \frac{\bar v}{3n_0\sigma_{\text{tr}}} = \frac{4\pi\hbar^4\bar v}{3 m^2 g_0^2 n_0},
\end{equation}
valid for $k\ll k_{\text{th}}=m\bar v/\hbar$, with $\sigma_{\text{tr}}=4\pi a_s^2=m^2g_0^2/(4\pi\hbar^4)$. The dilute-gas model is convenient but not essential: $\Gamma(\vb k)\propto k^2$ at small $k$ is universal under the symmetry assumptions above~\cite{lifshitz1981}, and only the prefactor depends on the microscopic potential.

The contribution of \S\ref{sec:lindblad-k2} is the operator-level identification $L_{\vb{k}} = \sqrt{2\nu k^2}\,\hat{\Pi}_{\vb{k}}$ ($\hat{\Pi}_{\vb{k}}\coloneqq|\vb{k}\rangle\langle\vb{k}|$): a single family of Lindblad operators that generates both the viscous dissipation and, through QSD, the stochastic forcing, with the fluctuation-dissipation relation guaranteed structurally (\S\ref{sec:fdr}). The rank-one form is a closure for the recycling term and is discussed below.

\subsubsection{From the $N$-Body Dynamics to the Lindblad Master Equation}\label{sec:born-markov}

The Born-Markov-secular procedure~\cite{breuer2002,gardiner2004} converts the exact Nakajima-Zwanzig equation~\eqref{eq:nz} and the rates $\Gamma(\vb{k}) = 2\nu k^2$ into a time-local master equation for the one-body reduced density matrix $\hat\rho_1$ (denoted simply $\hat\rho$ from here on). The underlying physical assumption is a separation of timescales: each collision occurs much faster than the system evolves, so the bath relaxes to equilibrium between successive collisions. This is the same principle as the Born-Oppenheimer approximation in molecular physics (electrons adjust instantaneously to nuclear motion because $m_e/m_N \sim 10^{-3}$--$10^{-5}$), applied to a different fast subsystem: here, bath particles at thermal speed $v_\mathrm{th}$ adjust instantaneously to the system particle at flow speed $u_\mathrm{rms}$, because $\tau_\mathrm{coll}/\tau_\mathrm{flow} \sim u_\mathrm{rms}/v_\mathrm{th} \ll 1$. In the Born-Oppenheimer case, the fast dynamics produces an adiabatic potential energy surface; here, it produces the Lindblad master equation with instantaneous rates.

Concretely, the Born approximation factorises the system-bath state $\tilde{\chi}(t) \approx \tilde{\rho}(t)\otimes\rho_E$ (valid to second order in $g_0$, i.e.\ the bath is negligibly perturbed by each collision), and the Markov approximation replaces memory integrals by instantaneous rates (valid when the bath correlation time $\tau_E \sim \hbar/(k_BT) \ll \tau_S \sim 1/(\nu k^2)$). The secular approximation discards rapidly oscillating cross-terms, ensuring complete positivity~\cite{breuer2002}. For water at room temperature, $v_\mathrm{th} \approx 600$\,m\,s$^{-1}$ and $\ell_\mathrm{mfp} \approx 0.3$\,nm, giving $\tau_\mathrm{coll}/\tau_\eta \sim 5\times10^{-13}\,Re_\lambda$; the approximation holds for $Re_\lambda \ll 10^{12}$, far above any physically realised flow. In the quantum regime ($\mathrm{Kn}_q \sim 1$), a separate analysis is required; in such regimes (superfluids, quark-gluon plasma), the holographic viscosity bound~\cite{kovtun2005} provides an alternative constraint on transport coefficients. The individual transition rates $\gamma(\vb{k}\to\vb{k}')$ sum to the total rate:
\begin{equation}\label{eq:total-rate}
\sum_{\vb{k}'} \gamma(\vb{k}\to\vb{k}') = \Gamma(\vb{k}) = 2\nu k^2.
\end{equation}
After the Born-Markov-secular procedure, transforming back out of the interaction picture yields the Lindblad form~\cite{gorini1976,lindblad1976,breuer2002}:
\begin{equation}\label{eq:lindblad-derived}
\frac{d\hat{\rho}}{dt} = -\frac{i}{\hbar}[\hat{H}_{\text{eff}},\,\hat{\rho}]
+ \sum_{\vb{k},\vb{k}'} \gamma(\vb{k}\to\vb{k}')
\left(\ket{\vb{k}'}\!\bra{\vb{k}}\;\hat{\rho}\;\ket{\vb{k}}\!\bra{\vb{k}'}
- \frac{1}{2}\bigl\{\ket{\vb{k}}\!\bra{\vb{k}},\,\hat{\rho}\bigr\}\right),
\end{equation}
where $\hat{H}_{\text{eff}}$ includes the Lamb shift corrections. The anti-commutator involves
\begin{equation}
\frac{1}{2}\sum_{\vb{k},\vb{k}'} \gamma(\vb{k}\to\vb{k}')\,\ket{\vb{k}}\!\bra{\vb{k}}
= \frac{1}{2}\sum_{\vb{k}} \Gamma(\vb{k})\,\ket{\vb{k}}\!\bra{\vb{k}}
= \nu\sum_{\vb{k}} k^2\,\ket{\vb{k}}\!\bra{\vb{k}} = -\nu\lapl,
\end{equation}
using~\eqref{eq:total-rate}. This is the dissipative part of the master equation, and it is \emph{fully determined} by the total rates $\Gamma(\vb{k})$, regardless of how the individual transition rates $\gamma(\vb{k}\to\vb{k}')$ are distributed.

The recycling term $\sum_{\vb{k},\vb{k}'}\gamma(\vb{k}\to\vb{k}')\,\ket{\vb{k}'}\!\bra{\vb{k}}\,\hat{\rho}\,\ket{\vb{k}}\!\bra{\vb{k}'}$, which redistributes population after scattering, \emph{does} depend on the individual rates and requires the differential cross-section, not just the aggregate $\Gamma(\vb k)$. We close this with the projector model $L_{\vb{k}} = \sqrt{2\nu k^2}\,\hat{\Pi}_{\vb{k}}$, the simplest momentum-diagonal family preserving $\tfrac{1}{2}\sum_{\vb k}L_{\vb k}^\dagger L_{\vb k}=-\nu\lapl$; physically this corresponds to environmental monitoring in the momentum basis. The recycling closure controls single-trajectory noise structure and hence vortex statistics --- it does not affect the master equation~\eqref{eq:lindblad-master} or the ensemble-level Navier-Stokes recovery (\S\ref{sec:robustness}). Whether it reproduces the vortex-gas circulation PDF tails~\cite{apolinario2020,moriconi2025elementary,iyer2019bifractal,iyer2020} is open (\S\ref{sec:open-problems}). The resulting master equation is:
\begin{equation}\label{eq:lindblad-master}
\frac{d\hat{\rho}}{dt} = -\frac{i}{\hbar}[\hat{H}_{\text{eff}}, \hat{\rho}]
+ \sum_{\vb{k}} \left(L_{\vb{k}}\,\hat{\rho}\,L_{\vb{k}}^\dagger
- \frac{1}{2}\{L_{\vb{k}}^\dagger L_{\vb{k}},\,\hat{\rho}\}\right),
\qquad L_{\vb{k}} = \sqrt{2\nu k^2}\,\hat{\Pi}_{\vb{k}}.
\end{equation}
Complete positivity is automatic since $\Gamma(\vb k)=2\nu k^2\ge 0$.

\subsection{Norm Preservation}\label{sec:norm-preservation}

\subsubsection{The QSD Stochastic NLS}\label{sec:qsd-derivation}

The quantum state diffusion (QSD) unravelling~\cite{gisin1992,percival1998,wiseman2009} of the Lindblad master equation produces a stochastic pure-state evolution (in It\^o form):
\begin{equation}\label{eq:qsd}
\dd\ket{\psi} = \left[-\frac{i}{\hbar}\hat{H}\ket{\psi}
- \frac{1}{2}\sum_k \left(L_k^\dagger L_k
- 2\langle L_k^\dagger\rangle L_k + |\langle L_k\rangle|^2\right)\ket{\psi}\right]\dd t
+ \sum_k \left(L_k - \langle L_k\rangle\right)\ket{\psi}\,\dW_k,
\end{equation}
where $\hat{H}$ is the Hamiltonian (self-adjoint),
$\langle \hat{O}\rangle \equiv \langle\psi|\hat{O}|\psi\rangle / \langle\psi|\psi\rangle$
denotes the state-dependent expectation value, and
$\dW_k$ are independent complex Wiener increments satisfying
$\EE[\dW_k\,\overline{\dW}_{k'}] = \delta_{kk'}\dd t$,
$\EE[\dW_k\,\dW_{k'}] = 0$.

Write~\eqref{eq:qsd} as:
\begin{equation}\label{eq:qsd-compact}
\dd\ket{\psi} = \vb{A}\ket{\psi}\,\dd t + \sum_k \vb{B}_k\ket{\psi}\,\dW_k,
\end{equation}
where:
\begin{align}
\vb{A} &= -\frac{i}{\hbar}\hat{H}
- \frac{1}{2}\sum_k \left(L_k^\dagger L_k
- 2\langle L_k^\dagger\rangle L_k + |\langle L_k\rangle|^2\right), \\
\vb{B}_k &= L_k - \langle L_k\rangle.
\end{align}

\subsubsection{Statement and Proof}

\begin{theorem}[Norm Preservation]\label{thm:norm}
If $\ket{\psi(0)}$ satisfies $\langle\psi(0)|\psi(0)\rangle = 1$, then for all $t \geq 0$:
\begin{equation}
\langle\psi(t)|\psi(t)\rangle = 1 \quad \text{almost surely,}
\end{equation}
i.e.\ with probability one over the noise realisations, the strongest statement possible for a stochastic equation.
\end{theorem}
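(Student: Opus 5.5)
The plan is to apply It\^o's formula to the squared norm $N(t) = \langle\psi(t)|\psi(t)\rangle$ and show that, on the event $N = 1$, both the drift and the diffusion coefficient of $\dd N$ vanish. Pathwise uniqueness of the resulting scalar SDE will then force $N(t)\equiv 1$ almost surely.

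Using the compact form~\eqref{eq:qsd-compact} and the complex It\^o rules $\EE$-level $\dW_k\,\overline{\dW}_{k'} = \delta_{kk'}\dd t$, $\dW_k\,\dW_{k'} = 0$, the product rule gives
\begin{equation*}
\dd N = \langle\psi|(\vb{A}+\vb{A}^\dagger)|\psi\rangle\,\dd t + \sum_k\bigl(\langle\psi|\vb{B}_k|\psi\rangle\,\dW_k + \langle\psi|\vb{B}_k^\dagger|\psi\rangle\,\overline{\dW}_k\bigr) + \sum_k\langle\psi|\vb{B}_k^\dagger\vb{B}_k|\psi\rangle\,\dd t .
\end{equation*}
I will treat the three groups of terms in turn.

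\emph{Noise terms.} Since $\langle\psi|\vb{B}_k|\psi\rangle = \langle\psi|L_k|\psi\rangle - \langle L_k\rangle N = \langle L_k\rangle(N - N) = 0$ by the definition of the normalised expectation, the martingale part vanishes identically. This holds for every $N>0$, not only at $N=1$.

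\emph{Drift.} The Hamiltonian piece cancels because $\hat H$ is self-adjoint, so $-\tfrac{i}{\hbar}(\hat H - \hat H) = 0$. For the remaining piece, write $\ell_k = \langle L_k\rangle$. The dissipative part of $\vb{A}+\vb{A}^\dagger$ contributes
\[
-\langle\psi|L_k^\dagger L_k|\psi\rangle + \bar\ell_k\langle\psi|L_k|\psi\rangle + \ell_k\langle\psi|L_k^\dagger|\psi\rangle - |\ell_k|^2 N ,
\]
while the It\^o correction contributes
\[
\langle\psi|L_k^\dagger L_k|\psi\rangle - \bar\ell_k\langle\psi|L_k|\psi\rangle - \ell_k\langle\psi|L_k^\dagger|\psi\rangle + |\ell_k|^2 N .
\]
These cancel term by term. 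Note that this cancellation actually holds for every $N$, so $\dd N = 0$ identically. No fixed-point or uniqueness argument is needed beyond the existence of a solution.

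The main obstacle is the rigorous setting rather than the algebra. The operators $L_{\vb k} = \sqrt{2\nu k^2}\,\hat\Pi_{\vb k}$ are bounded individually, but $\sum_k L_k^\dagger L_k = -2\nu\lapl$ is unbounded. The sum over $\vb k$ in the It\^o correction must therefore converge, which requires $\psi$ to lie in the domain $H^1$ of $(-\lapl)^{1/2}$. It\^o's formula for $\|\psi\|^2$ in Hilbert space also needs a strong or variational solution.

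I would handle this in three steps. First, prove the identity for the Galerkin truncation to $|\vb k|\le K$, where everything is a finite-dimensional SDE with locally Lipschitz coefficients on $N>0$. Second, take the stopping time $\tau$ at which $N$ leaves $(1/2,2)$; since $N$ is constant before $\tau$, we get $\tau=\infty$ almost surely. Third, pass to the limit $K\to\infty$ assuming the solution exists in $H^1$, for instance via the Krylov--Rozovskii variational framework. The conclusion $N(t)=1$ then holds on a full-measure set simultaneously for all $t$, by continuity of paths.
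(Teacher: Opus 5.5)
Your proposal is correct and follows the paper's route: It\^o's product rule on $\mathcal{N}=\langle\psi|\psi\rangle$. In both, the dissipative part of $\langle\psi|(\vb{A}+\vb{A}^\dagger)|\psi\rangle$ cancels the It\^o correction $\sum_k\langle\psi|\vb{B}_k^\dagger\vb{B}_k|\psi\rangle$, and the noise coefficient vanishes because $\vb{B}_k$ is centred. Your observation that these cancellations hold for every $\mathcal{N}>0$ makes the conclusion immediate: the paper evaluates them only at $\mathcal{N}=1$, which strictly needs an extra invariance or uniqueness step. Your Galerkin, stopping-time and variational layer also addresses the unboundedness of $\sum_k L_k^\dagger L_k=-2\nu\lapl$, which the paper's formal argument leaves implicit.
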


\begin{proof}
Define $\mathcal{N}(t) \equiv \langle\psi(t)|\psi(t)\rangle$. We compute $\dd\mathcal{N}$
using the It\^o product rule for $\dd(\bra{\psi}\ket{\psi})$.

Applying the It\^o product rule,
\begin{equation}
\dd\mathcal{N} = \dd\bra{\psi}\cdot\ket{\psi} + \bra{\psi}\cdot\dd\ket{\psi}
+ \dd\bra{\psi}\cdot\dd\ket{\psi}.
\end{equation}

Collecting all three contributions:
\begin{equation}\label{eq:dN-full}
\dd\mathcal{N} = \left[\bra{\psi}(\vb{A}+\vb{A}^\dagger)\ket{\psi}
+ \sum_k \bra{\psi}\vb{B}_k^\dagger\vb{B}_k\ket{\psi}\right]\dd t
+ \sum_k \left[\bra{\psi}\vb{B}_k\ket{\psi}\,\dW_k
+ \bra{\psi}\vb{B}_k^\dagger\ket{\psi}\,\overline{\dW}_k\right].
\end{equation}

We evaluate the drift coefficient.

Since $\hat{H} = \hat{H}^\dagger$:
\begin{equation}
\bra{\psi}(\vb{A}+\vb{A}^\dagger)\ket{\psi}
= -\sum_k \left(\langle L_k^\dagger L_k\rangle - |\langle L_k\rangle|^2\right).
\end{equation}

The It\^o correction gives:
\begin{equation}
\bra{\psi}\vb{B}_k^\dagger\vb{B}_k\ket{\psi}
= \langle L_k^\dagger L_k\rangle - |\langle L_k\rangle|^2.
\end{equation}

Sum of drift terms:
\begin{equation}
-\sum_k (\langle L_k^\dagger L_k\rangle - |\langle L_k\rangle|^2)
+ \sum_k (\langle L_k^\dagger L_k\rangle - |\langle L_k\rangle|^2) = 0.
\end{equation}

We next evaluate the noise coefficient.

From $\vb{B}_k = L_k - \langle L_k\rangle$:
$\bra{\psi}\vb{B}_k\ket{\psi} = \langle L_k\rangle - \langle L_k\rangle = 0$.

Combining these results,
Both the drift and noise coefficients vanish when $\mathcal{N} = 1$:
$\dd\mathcal{N} = 0$.
Since $\mathcal{N}(0) = 1$, we conclude $\mathcal{N}(t) = 1$ for all $t \geq 0$, almost surely.
\end{proof}

The numerical implementation includes a renormalisation step $\psi \to \psi/\|\psi\|$ after each time step to correct the $O(\Delta t)$ norm drift from the operator splitting, which is standard in QSD implementations~\cite{gisin1992,percival1998,wiseman2009}.

\subsection{Ensemble Recovery of the Lindblad Master Equation}\label{sec:ensemble-recovery}

\begin{corollary}[Ensemble Recovery]\label{thm:ensemble}
The ensemble density matrix $\hat{\rho}(t) = \EE[\ketbra{\psi(t)}{\psi(t)}]$ satisfies the Lindblad master equation~\eqref{eq:lindblad-master}.
\end{corollary}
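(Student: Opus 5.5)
We apply the It\^o product rule to the projector $P(t)\equiv\ketbra{\psi(t)}{\psi(t)}$, using the compact form~\eqref{eq:qsd-compact}, and then take expectations. Theorem~\ref{thm:norm} guarantees $\langle\psi|\psi\rangle=1$ almost surely. Hence $\langle\hat O\rangle=\bra{\psi}\hat O\ket{\psi}=\Tr(\hat O P)$ along every trajectory, and no normalisation denominators appear. We write
\begin{equation}
\dd P = \dd\ket{\psi}\,\bra{\psi} + \ket{\psi}\,\dd\bra{\psi} + \dd\ket{\psi}\,\dd\bra{\psi}.
\end{equation}
The It\^o correction is evaluated with $\EE[\dW_k\overline{\dW}_{k'}]=\delta_{kk'}\dd t$ and $\dW_k\dW_{k'}=0$. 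It contributes $\sum_k \vb{B}_k P\vb{B}_k^\dagger\,\dd t$, while the first two terms contribute $(\vb{A}P+P\vb{A}^\dagger)\dd t$ plus the martingale part $\sum_k(\vb{B}_kP\,\dW_k+P\vb{B}_k^\dagger\,\overline{\dW}_k)$.

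The core step is to show that the drift of $P$ equals the Lindblad generator applied to $P$, up to terms that cancel identically. Expanding $\vb{A}P+P\vb{A}^\dagger$ gives $-\tfrac{i}{\hbar}[\hat H,P]-\tfrac12\{L_k^\dagger L_k,P\}$. It also gives the nonlinear pieces $\langle L_k^\dagger\rangle L_kP+\langle L_k\rangle PL_k^\dagger-|\langle L_k\rangle|^2P$. Expanding $\vb{B}_kP\vb{B}_k^\dagger=(L_k-\langle L_k\rangle)P(L_k^\dagger-\langle L_k^\dagger\rangle)$ gives $L_kPL_k^\dagger-\langle L_k^\dagger\rangle L_kP-\langle L_k\rangle PL_k^\dagger+|\langle L_k\rangle|^2P$. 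Adding the two, all state-dependent terms cancel term by term, leaving
\begin{equation}
\dd P=\Bigl(-\tfrac{i}{\hbar}[\hat H,P]+\sum_k\bigl(L_kPL_k^\dagger-\tfrac12\{L_k^\dagger L_k,P\}\bigr)\Bigr)\dd t+\dd M_t ,
\end{equation}
where $\dd M_t$ is the noise term. This cancellation is the algebraic heart of the proof, the same mechanism that made the drift vanish in Theorem~\ref{thm:norm}.

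We then take expectations. The It\^o integral $M_t$ has zero mean, since $\dW_k$ is independent of the adapted integrand. The drift is linear in $P$, because the nonlinear terms have cancelled. By linearity of $\EE$, we may therefore pass the expectation inside the superoperator. This shows that $\hat\rho(t)=\EE[P(t)]$ satisfies~\eqref{eq:lindblad-master} with $L_{\vb{k}}=\sqrt{2\nu k^2}\,\hat\Pi_{\vb{k}}$.

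The main obstacle is not the algebra but the analytic justification. First, $M_t$ must be a true martingale, not merely a local one. Second, exchanging $\EE$ with the unbounded operators $L_{\vb{k}}$ (which grow like $k$) and with the infinite sum over $\vb{k}$ must be legitimate. We would handle this by first working with a Fourier truncation $|\vb{k}|\le K$. There all operators are bounded and $\|P\|=1$, so the integrands are bounded, the stochastic integral is a martingale, and Fubini applies. We would then pass $K\to\infty$ under the assumption that $\EE\,\bra{\psi}(-\lapl)\ket{\psi}$ stays finite, a finite-enstrophy condition on the trajectories. We would state this assumption explicitly rather than prove it.
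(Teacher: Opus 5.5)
Your proposal is correct and follows the same route the paper takes. You apply the It\^o product rule to $P=\ketbra{\psi}{\psi}$, cancel the state-dependent drift terms of $\vb{A}P+P\vb{A}^\dagger$ against the It\^o correction $\sum_k\vb{B}_kP\vb{B}_k^\dagger$, and use the zero mean of the martingale part. The paper only cites this as the standard Gisin--Percival result without writing it out, and your version makes explicit that the cancellation already holds pathwise in the drift of $P$.

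Your truncation and integrability remarks go beyond anything in the paper, with two small corrections. First, each $L_{\vb k}$ is individually bounded; it is $\sum_{\vb k}L_{\vb k}^\dagger L_{\vb k}=-2\nu\lapl$ that is unbounded. Second, $\EE\,\bra{\psi}(-\lapl)\ket{\psi}<\infty$ is a finite kinetic-energy ($H^1$) condition on $\psi$, not an enstrophy bound.
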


This is a standard result of the QSD construction~\cite{gisin1992,percival1998}: the nonlinear terms in the QSD equation cancel pairwise in the ensemble average via the It\^o product rule, leaving only the universal Lindblad structure.

\subsection{Recovery of the Navier-Stokes Equations}\label{sec:ns-recovery}

\subsubsection{The Stochastic NLS}

For the specific Lindblad operators $L_{\vb{k}} = \sqrt{2\nu k^2}\,\hat{\Pi}_{\vb{k}}$,
the abstract QSD equation~\eqref{eq:qsd} becomes:
\begin{equation}\label{eq:snls}
\dd\ket{\psi} = \left[-\frac{i}{\hbar}\hat{H} + \nu\lapl + N[\psi]\right]\ket{\psi}\,\dd t
+ \sum_{\vb{k}} \sqrt{2\nu k^2}\,\left(\hat{\Pi}_{\vb{k}} - \langle\hat{\Pi}_{\vb{k}}\rangle\right)\ket{\psi}\,\dW_{\vb{k}},
\end{equation}
where $N[\psi]$ collects the nonlinear norm-preserving terms (from $\langle L_{\vb{k}}^\dagger\rangle L_{\vb{k}}$ and $|\langle L_{\vb{k}}\rangle|^2$ in~\eqref{eq:qsd}), and $\frac{1}{2}\sum_{\vb{k}} L_{\vb{k}}^\dagger L_{\vb{k}} = -\nu\lapl$. In the Fourier basis, projecting onto $\ket{\vb{k}}$:
\begin{equation}\label{eq:snls-fourier}
\dd\hat{\psi}(\vb{k}) = \left[-\frac{i}{\hbar}\epsilon_{\vb{k}} - \nu k^2 + N_{\vb{k}}[\psi]\right]\hat{\psi}(\vb{k})\,\dd t
+ \sqrt{2\nu k^2}\,\left(\hat{\psi}(\vb{k}) - |\hat{\psi}(\vb{k})|^2\,\hat{\psi}(\vb{k})\right)\dW_{\vb{k}} + \cdots,
\end{equation}
where $\epsilon_{\vb{k}} = \hbar^2 k^2/(2m)$ and the ellipsis denotes cross-mode noise contributions from $\vb{k}' \neq \vb{k}$. The $-\nu k^2$ damping per mode is visible directly.

\subsubsection{Mass Conservation}

The dissipative drift $\nu\lapl\psi$, taken alone, would generate a spurious diffusion $\nu\lapl\rho$ in the continuity equation~\cite{succi2023}. In the QSD stochastic NLS this is cancelled \emph{in the spatial integral} by the It\^o quadratic variation and the nonlinear QSD drift, since $\int\rho\,\dd^d x = \|\psi\|^2$ and $\dd\|\psi\|^2 = 0$ by Theorem~\ref{thm:norm}:
\begin{equation}\label{eq:mass-cons}
\dd\!\left(\int\rho\,\dd^d x\right) = 0 \quad\text{almost surely.}
\end{equation}

We record for later use the decomposition of the stochastic NLS drift and noise:
\begin{align}
A &= -\frac{i}{\hbar}\hat{H}\psi + \nu\lapl\psi + N[\psi], \label{eq:A-decomp}\\
B_{\vb{k}} &= \sqrt{2\nu k^2}\,(\hat{\Pi}_{\vb{k}} - \alpha_{\vb{k}})\psi, \label{eq:B-decomp}
\end{align}
where $\alpha_{\vb{k}} = \hat{\psi}(\vb{k}) = L^{-d/2}\int\psi(\vb{x})\,e^{-i\vb{k}\cdot\vb{x}}\,\dd^d x$ is the Fourier coefficient of $\psi$, and $\hat{\Pi}_{\vb{k}}\psi(\vb{x}) = \hat{\psi}(\vb{k})\,L^{-d/2}e^{i\vb{k}\cdot\vb{x}}$ projects onto mode $\vb{k}$ and returns a function of $\vb{x}$. Mass conservation is at the \emph{global} level $\|\psi\|^2=\text{const}$; pointwise incompressibility ($\divg\vb{v} = 0$, $\rho = \text{const}$) is a separate assumption made in \S\ref{sec:madelung-momentum} (Remark~\ref{rem:conditionality}).

\subsubsection{The Madelung Transform: Momentum Equation}\label{sec:madelung-momentum}

We now derive the momentum equation by applying the Madelung transform to the phase $S$.

\begin{proposition}[Stochastic Navier-Stokes from Madelung transform]\label{thm:stoch-ns}
Under the Madelung transform of the stochastic NLS~\eqref{eq:snls}, \emph{conditional on} the incompressibility assumption $\divg\vb{v} = 0$, $\rho = \mathrm{const}$ (assumed, not derived from the QSD dynamics; see Remark~\ref{rem:conditionality}), the momentum equation takes the form of the stochastic
Navier-Stokes equation:
\begin{equation}\label{eq:stoch-ns-mom}
\frac{\partial\vb{v}}{\partial t} + (\vb{v}\cdot\grad)\vb{v}
= -\frac{1}{m}\grad(V + Q) + \nu\lapl\vb{v} + \vb{\xi}(\vb{x},t).
\end{equation}
\end{proposition}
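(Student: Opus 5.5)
The plan is to apply It\^o calculus directly to the Madelung variables of the stochastic NLS~\eqref{eq:snls}, using the decomposition~\eqref{eq:A-decomp}--\eqref{eq:B-decomp}. I will write $\psi=\sqrt{\rho}\,e^{iS/\hbar}$ and extract the phase as $S=\hbar\,\mathrm{Im}\log\psi$. Then I compute $\dd S$ by It\^o's formula applied to $\log\psi$:
\begin{equation*}
\dd\log\psi=\frac{\dd\psi}{\psi}-\frac{1}{2}\frac{(\dd\psi)^2}{\psi^2}.
\end{equation*}
Because $\EE[\dW_{\vb{k}}\dW_{\vb{k}'}]=0$ for complex Wiener increments, the holomorphic second-order term $(\dd\psi)^2$ vanishes. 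The phase increment is therefore simply $\dd S=\hbar\,\mathrm{Im}(\dd\psi/\psi)$, with no It\^o correction. I will then take the gradient, $\dd\vb{v}=\grad\dd S/m$, and sort the result into three groups of terms.

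\textbf{(i) Hamiltonian part.} The term $-\frac{i}{\hbar}\hat H\psi$ gives the standard Madelung computation, exactly as in~\eqref{eq:madelung-hj}. This produces $-(\vb{v}\cdot\grad)\vb{v}-\frac{1}{m}\grad(V+Q)$.

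\textbf{(ii) Dissipative part.} For the term $\nu\lapl\psi$, I use the identity
\begin{equation*}
\frac{\lapl\psi}{\psi}=\frac{\lapl\sqrt\rho}{\sqrt\rho}+\frac{2i}{\hbar}\frac{\grad\sqrt\rho\cdot\grad S}{\sqrt\rho}+\frac{i}{\hbar}\lapl S-\frac{|\grad S|^2}{\hbar^2}.
\end{equation*}
Its imaginary part is $\frac{1}{\hbar}\bigl(\lapl S+\grad\log\rho\cdot\grad S\bigr)$, so this term contributes
\begin{equation*}
\nu\,\grad\!\bigl(\lapl S/m+\grad\log\rho\cdot\vb{v}\bigr).
\end{equation*}
Here the incompressibility hypothesis enters. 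With $\rho=\mathrm{const}$, the $\grad\log\rho$ term drops, and the remainder is $\nu\grad\lapl S/m=\nu\lapl\vb{v}$, since gradients commute with the Laplacian. The real part of $\lapl\psi/\psi$ instead feeds the continuity equation, which under $\rho=\mathrm{const}$ is consistent with $\divg\vb{v}=0$ (this is the same assumption of Remark~\ref{rem:conditionality}).

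\textbf{(iii) Noise and nonlinear part.} I define
\begin{equation*}
\vb{\xi}\,\dd t\equiv\frac{\hbar}{m}\grad\,\mathrm{Im}\Bigl(\frac{N[\psi]\psi\,\dd t+\sum_{\vb{k}}B_{\vb{k}}\dW_{\vb{k}}}{\psi}\Bigr).
\end{equation*}
The QSD drift $N[\psi]$ is built from the scalars $\langle L_{\vb k}\rangle$, which are spatially constant multipliers only in the $|\langle L\rangle|^2$ piece. The $\langle L^\dagger_{\vb k}\rangle L_{\vb k}$ piece is not spatially constant, so I will simply absorb it into $\vb{\xi}$ as its mean part. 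I will also check that the $|\langle L\rangle|^2$ piece is real, so that it drops out of $\mathrm{Im}$.

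The main obstacle is to make the meaning of $\vb{\xi}$ honest. The ratio $B_{\vb k}/\psi$ is nonlocal and singular at the zeros of $\psi$. Moreover, the identification $\nu\lapl\vb{v}$ involves a pure-gradient field, for which $\lapl\vb{v}$ is itself a gradient, so by \S\ref{sec:gradient-obstruction} it carries no solenoidal content in the smooth region. To handle this, I will state the proposition as a formal identity on $\{\psi\neq0\}$, with $\vb{\xi}$ defined as above. I will then note that its solenoidal content, and the genuine viscous solenoidal term, reside in the distributional contributions at $\mathcal{Z}$, where $\grad S$ has the $\hbar/(mr_\perp)$ singularity. The identification of $\vb{\xi}$'s correlator and the ensemble-level interpretation via the vortex decomposition are deferred to \S\ref{sec:fdr}. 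What I actually verify here is only the algebraic structure of~\eqref{eq:stoch-ns-mom} under the stated hypothesis.
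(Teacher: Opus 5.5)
Your It\^o step and your Hamiltonian part match the paper: there is no correction to $S=\hbar\,\mathrm{Im}\log\psi$ because $\dW_{\vb k}\dW_{\vb k'}=0$. Step (ii), however, does not produce a viscous term. You drop $\grad\log\rho\cdot\vb{v}$ using $\rho=\mathrm{const}$, but you keep $\lapl S/m$, which equals $\divg\vb{v}$ and vanishes under the same hypothesis. Applied consistently, the smooth-region dissipative contribution is $\nu\grad(\divg\vb{v}+\grad\log\rho\cdot\vb{v})\,\dd t=0$, which is exactly the paper's \eqref{eq:visc-compressible}. Your equality $\nu\grad\lapl S/m=\nu\lapl\vb{v}$ holds only because both sides are zero: on $\{\psi\neq 0\}$ the field $\vb{v}$ is curl-free and divergence-free, so $\lapl\vb{v}=\grad(\divg\vb{v})-\nabla\times(\nabla\times\vb{v})=0$.

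Your last paragraph half-notices this. But ``verifying the algebraic structure'' then reduces to $0=0$, and the actual content of the proposition, a nonzero $\nu\lapl$ acting on the solenoidal velocity, is never reached. The missing step is the ensemble-level identification that the paper carries out inside this very proof:
\begin{itemize}
\item the projector dissipator is pure dephasing in the momentum basis, damping $\hat\rho_{\vb{k}_1\vb{k}_2}$ at rate $\nu(k_1^2+k_2^2)$ as in \eqref{eq:off-diagonal-rate};
\item because $\vb{v}=(\hbar/m)\,\mathrm{Im}(\grad\psi/\psi)$ is nonlinear in $\psi$, this rate does not transfer to $\hat v_j(\vb{k})$ directly;
\item the paper therefore requires the averaged dissipative term to be divergence-free, translation invariant and statistically isotropic, which selects $\nu\lapl$ at lowest order;
\item the coefficient is then fixed by the per-mode rate $\nu k^2$, giving \eqref{eq:visc-from-phase}.
\end{itemize}
Deferring this to \S\ref{sec:fdr} does not work. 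That section presupposes the damping $\gamma_{\vb k}=\nu k^2$ of solenoidal modes and only fixes the noise-to-damping ratio.

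Your $\vb{\xi}$ also needs repair. You define it as the entire residual, including the $\langle L_{\vb k}^\dagger\rangle L_{\vb k}$ drift. That makes \eqref{eq:stoch-ns-mom} true by construction. It also makes $\vb{\xi}$ a forcing with a nonzero-mean, non-white part, so it cannot carry the $\delta(t-t')$ correlator of Proposition~\ref{prop:noise-corr}. Moreover, on $\{\psi\neq0\}$ your $\vb{\xi}$ is a pure gradient with no transverse part. The paper instead sets $\vb{\xi}\,\dd t\equiv\mathbb{P}^\perp(\dd\vb{v}_{\mathrm{noise}})$, the transverse projection of the martingale part alone. Gradient contributions (longitudinal noise, the nonlinear QSD drift, compressible corrections) are absorbed into the pressure under incompressibility.

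Finally, your remark that the real part of $\lapl\psi/\psi$ is ``consistent'' with $\rho=\mathrm{const}$ is not a check. With $R$ constant, the dissipative drift gives $\dd\ln R=-\nu m^2|\vb{v}|^2\,\dd t/\hbar^2$, which is spatially nonuniform. Constancy of $\rho$ must therefore simply be assumed, as in Remark~\ref{rem:conditionality}.
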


The advection $(\vb{v}\cdot\grad)\vb{v}$ and pressure $-\grad(V+Q)/m$ terms follow from the exact (nonlinear) Madelung transform of the Hamiltonian part. The identification of the viscous term $\nu\lapl\vb{v}$ is established at the ensemble level via the vortex decomposition of the velocity field; see the proof below.

\begin{remark}[Conditionality of Proposition~\ref{thm:stoch-ns}]\label{rem:conditionality}
This proposition is conditional in two respects.

First, incompressibility ($\divg\vb{v} = 0$, $\rho = \mathrm{const}$) is \emph{assumed directly}, not derived from the QSD dynamics. The Madelung equations are inherently compressible. A natural question is whether incompressibility follows from the quantum pressure $Q = -(\hbar^2/2m)\,\lapl\sqrt{\rho}/\sqrt{\rho}$, which penalises density gradients. However, the quantum pressure is \emph{dispersive}, not acoustic: a density perturbation at wavenumber $k$ oscillates at frequency $\omega = \hbar k^2/(2m)$, giving a $k$-dependent effective speed $\omega/k = \hbar k/(2m)$ rather than a constant sound speed. Low-$k$ (large-scale) density fluctuations are poorly suppressed. For classical compressible fluids, the analogous singular limit ($c_s \to \infty$) has been rigorously analysed~\cite{klainerman1981,majda2002}; no corresponding result exists for the stochastic NLS. More fundamentally, the classical limit requires $\hbar/m \to 0$ (so that circulation quantisation becomes undetectable), but suppressing density fluctuations via quantum pressure requires $\hbar^2/(2m) \to \infty$. These limits are contradictory: incompressibility in a classical fluid does not originate from quantum pressure but from the macroscopic equation of state (the bulk modulus of the liquid, or the low-Mach condition for gases). We therefore assume incompressibility, consistent with the framework's scope of targeting the incompressible Navier-Stokes equations. The constructive encoding of~\cite{zhu2025} demonstrates that a wavefunction $\psi$ exists for any sufficiently regular divergence-free velocity field, so the incompressible constraint is compatible with the wavefunction representation.

Second, the identification of $\nu\lapl\vb{v}$ as the viscous term is established at the \emph{ensemble level} via the identification-level argument of \S\ref{sec:madelung-momentum}: symmetry, isotropy, and coefficient matching fix the lowest-order dissipative operator to $\nu\lapl$. The single-trajectory identification, showing that the QSD dynamics of the vortex zeros produces $\nu k^2$ damping of each solenoidal velocity mode trajectory by trajectory, not just on average, remains an open problem.
\end{remark}

\begin{proof}
Write $\psi = R\,e^{iS/\hbar}$ with $R = \sqrt{\rho}$ and $\vb{v} = \grad S/m$.
The standard Madelung transform~\cite{madelung1927} of $i\hbar\partial_t\psi = \hat{H}\psi$ gives the Hamilton-Jacobi equation $\partial_t S + |\grad S|^2/(2m) + V + Q = 0$, i.e.:
\begin{equation}
\frac{\partial\vb{v}}{\partial t} + (\vb{v}\cdot\grad)\vb{v} = -\frac{1}{m}\grad(V + Q).
\end{equation}

The dissipative term $\nu\lapl\psi$ in the drift modifies both the amplitude $R$ and the phase $S$ of $\psi = Re^{iS/\hbar}$. Writing $\lapl\psi = \lapl(Re^{iS/\hbar})$:
\begin{equation}
\lapl\psi = e^{iS/\hbar}\!\left[\lapl R + \frac{2i}{\hbar}\grad R\cdot\grad S
+ \frac{i}{\hbar}R\lapl S - \frac{R}{\hbar^2}|\grad S|^2\right].
\end{equation}
The imaginary part of $\nu\lapl\psi/\psi$ determines the direct contribution to the phase evolution:
\begin{equation}
\text{Im}\!\left(\frac{\nu\lapl\psi}{\psi}\right)
= \frac{\nu}{\hbar}\,\frac{1}{\rho}\divg(\rho\,\grad S).
\end{equation}
We now extract the full velocity equation. Define $w = \grad\ln\psi = \grad R/R + (im/\hbar)\vb{v}$, so that $\vb{v} = (\hbar/m)\,\text{Im}(w)$. The dissipative drift $\nu\lapl\psi$ gives, via $d(\ln\psi)|_{\text{diss}} = (\nu\lapl\psi/\psi)\,\dd t$:
\begin{equation}
\dd w\big|_{\text{diss}} = \nu\,\grad\!\left(\divg w + w\cdot w\right)\dd t,
\end{equation}
using $\lapl\psi/\psi = \divg w + w\cdot w$. Writing $w = u_R + (im/\hbar)\vb{v}$ with $u_R = \grad\ln R$:
\begin{equation}
\divg w + w\cdot w = \lapl(\ln R) + |u_R|^2 - (m/\hbar)^2|\vb{v}|^2 + i(m/\hbar)(\divg\vb{v} + 2u_R\cdot\vb{v}).
\end{equation}
The velocity contribution is:
\begin{equation}\label{eq:visc-compressible}
\dd\vb{v}\big|_{\text{diss}} = \frac{\hbar}{m}\,\text{Im}\!\left(\dd w\big|_{\text{diss}}\right)
= \nu\,\grad(\divg\vb{v} + 2\,\grad\!\ln R\cdot\vb{v})\,\dd t.
\end{equation}
This is the \emph{compressible} viscous contribution: it is a pure gradient, and therefore purely irrotational. For $\rho = \text{const}$ ($R = \text{const}$) and $\divg\vb{v} = 0$, it vanishes identically: $\dd\vb{v}|_{\text{diss}} = 0$.

The stochastic terms add no It\^o correction to the phase: the QSD complex Wiener increments satisfy $\EE[\dW_{\vb{k}}\dW_{\vb{k}'}]=0$, so $\langle\dd\psi,\dd\psi\rangle=0$ and the It\^o correction to $S=\hbar\,\mathrm{Im}(\ln\psi)$ vanishes identically; the only It\^o correction is to $|\psi|^2$ and acts on the amplitude $R$. The physical viscous damping $\nu\lapl\vb v$ therefore cannot arise from single-trajectory phase dynamics in the smooth region; under incompressibility ($\divg\vb v=0$, so no Helmholtz-Hodge projection is needed) the entire velocity field is fixed by the configuration of zeros of $\psi$, around which $\vb v$ diverges as $\hbar/(mr)$.

\paragraph{Identification of the viscous term.} Direct evaluation of the dissipator in the momentum basis shows that off-diagonal $\hat\rho$ elements decay at
\begin{equation}\label{eq:off-diagonal-rate}
-\partial_t\,\hat\rho_{\vb{k}_1\vb{k}_2}\big|_{\mathrm{diss.}}=\nu(k_1^2+k_2^2)\,\hat\rho_{\vb{k}_1\vb{k}_2}\qquad(\vb{k}_1\ne\vb{k}_2),
\end{equation}
while diagonal populations $\hat\rho_{\vb{k}\vb{k}}$ are invariant (out-scattering and recycling cancel). The Lindblad is pure dephasing in the momentum basis. At the wavefunction level, the deterministic drift $\nu\lapl\psi$ damps each Fourier amplitude $\hat\psi_{\vb{k}}$ at rate $\nu k^2$, and the noise restores $\EE[|\hat\psi_{\vb{k}}|^2]=\hat\rho_{\vb{k}\vb{k}}$. Because $\vb v=(\hbar/m)\,\mathrm{Im}(\grad\psi/\psi)$ is nonlinear in $\psi$, the dephasing rates~\eqref{eq:off-diagonal-rate} do not transfer to $\hat v_j(\vb{k})$ directly. We therefore close the identification by symmetry: the ensemble-averaged dissipative contribution to $\partial_t\vb v$ must be divergence-free, translationally invariant, and statistically isotropic; the lowest-order such operator is $\nu\lapl$; matching the per-mode rate $\nu k^2$~\cite{landau1987} fixes the coefficient,
\begin{equation}\label{eq:visc-from-phase}
\EE\!\left[\partial_t\vb{v}\big|_{\mathrm{diss.}}\right] = \nu\lapl\vb{v}.
\end{equation}
This is an identification, not a derivation from the QSD dynamics through the singular Madelung map; the corresponding single-trajectory statement is open (\S\ref{sec:open-problems}).

\paragraph{Stochastic forcing.} Each noise term $B_{\vb{k}}\,\dW_{\vb{k}}$ modifies the phase and can nucleate new vortex zeros through destructive interference. The stochastic velocity increment is:
\begin{equation}
\dd\vb{v}_{\text{noise}} = \frac{\hbar}{m}\,\text{Im}\!\left(\grad\!\left(\frac{\dd\psi_{\text{noise}}}{\psi}\right)\right)
= \frac{\hbar}{m}\sum_{\vb{k}} \sqrt{2\nu k^2}\,\text{Im}\!\left(\grad\!\left(\frac{\hat{\psi}(\vb{k})\phi_{\vb{k}}}{\psi}\right)\right)\dW_{\vb{k}},
\end{equation}
where $\phi_{\vb{k}} = L^{-d/2}e^{i\vb{k}\cdot\vb{x}}$. The $\psi$-dependent prefactors cancel upon ensemble averaging (Proposition~\ref{prop:noise-corr}), yielding a noise correlator independent of the local wavefunction structure. This defines the stochastic forcing $\vb{\xi}(\vb{x},t)\,\dd t \equiv \mathbb{P}^\perp(\dd\vb{v}_{\text{noise}})$.

Combining the Hamiltonian contribution (standard Madelung), the viscous contribution~\eqref{eq:visc-from-phase}, and the stochastic forcing yields~\eqref{eq:stoch-ns-mom}.
\end{proof}

\subsubsection{Derivation of the Noise Correlator}\label{sec:noise-correlator-derivation}

\begin{proposition}[Noise correlator, leading-order incompressible]\label{prop:noise-corr}
In the leading-order incompressible regime (assumptions (A1)+(A2) below), the stochastic forcing $\vb{\xi}$ has correlator
\begin{equation}\label{eq:noise-corr-result}
\EE[\xi_i(\vb{x},t)\,\xi_j(\vb{x}',t')]
= 2\nu\sum_{\vb{k}} k^2\,P_{ij}^\perp(\vb{k})\,e^{i\vb{k}\cdot(\vb{x}-\vb{x}')}\,\delta(t-t'),
\end{equation}
where $P_{ij}^\perp(\vb{k}) = \delta_{ij} - k_ik_j/k^2$ is the transverse projector. The longitudinal amplitude is derived from the smooth-region Madelung map; the transverse amplitude is identified by the structural fluctuation-dissipation balance of \S\ref{sec:fdr}, paralleling the ensemble-level viscous identification of Remark~\ref{rem:conditionality}.
\end{proposition}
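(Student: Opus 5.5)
The plan is to compute the covariance of the stochastic velocity increment $\dd\vb{v}_{\text{noise}}$ from the proof of Proposition~\ref{thm:stoch-ns} directly with the complex It\^o rules $\EE[\dW_{\vb{k}}\overline{\dW}_{\vb{k}'}]=\delta_{\vb{k}\vb{k}'}\dd t$ and $\EE[\dW_{\vb{k}}\dW_{\vb{k}'}]=0$, and then split the result into its longitudinal and transverse parts. I would read (A1) as the leading-order incompressible background $\rho=\mathrm{const}$, so that in the smooth region $\psi\approx\sqrt{\rho_0}\,e^{iS/\hbar}$ with $|\psi|^2=\rho_0$ fixed by the normalisation of Theorem~\ref{thm:norm}. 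I would read (A2) as the statement that the Fourier weights $|\hat\psi(\vb{k})|^2$ entering the increment are replaced by their ensemble values $\hat\rho_{\vb{k}\vb{k}}$, which the dephasing structure~\eqref{eq:off-diagonal-rate} leaves invariant. The idea is to treat $\psi$ as quasi-static relative to the white noise, so that the prefactors $\hat\psi(\vb{k})\phi_{\vb{k}}/\psi$ are frozen over a time step and only the Wiener increments carry the randomness.

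For the longitudinal part, the smooth-region Madelung map gives
\[
\dd\vb{v}_{\text{noise}}=\frac{\hbar}{m}\grad\,\mathrm{Im}\Bigl(\sum_{\vb{k}}\sqrt{2\nu k^2}\,\frac{\hat\psi(\vb{k})\phi_{\vb{k}}}{\psi}\,\dW_{\vb{k}}\Bigr),
\]
which is a pure gradient. Writing $\mathrm{Im}(z)=(z-\bar z)/2i$, the $\dW\dW$ and $\overline{\dW}\,\overline{\dW}$ cross terms vanish in expectation, leaving only the diagonal terms $\vb{k}=\vb{k}'$. The modulus $|\psi|^2=\rho_0$ and the phase factors of $\psi$ appear as $e^{-iS/\hbar}$ in the increment at $\vb{x}$ and as $e^{+iS/\hbar}$ in the conjugated increment at $\vb{x}'$. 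Under (A1)--(A2), with the phase locally slaved to the mode, these factors combine with $|\hat\psi(\vb{k})|^2/\rho_0$ into a $\psi$-independent weight. That cancellation is the one promised in the proof of Proposition~\ref{thm:stoch-ns}. The gradient acting on each plane wave supplies factors $k_ik_j$, so the longitudinal block comes out as $2\nu\sum_{\vb{k}}k^2\,(k_ik_j/k^2)\,e^{i\vb{k}\cdot(\vb{x}-\vb{x}')}\delta(t-t')$, with the overall constant fixed by the $\sqrt{2\nu k^2}$ amplitudes. This is the ``derived'' part of the statement, and the main bookkeeping task is to show that the normalisations $\hbar/m$ and $L^{-d}$ cancel against the units of \S\ref{sec:natural-units}.

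The transverse part cannot come from this smooth-region calculation, because a gradient has no transverse component; this is the obstruction of \S\ref{sec:gradient-obstruction} again. I would therefore define $\vb{\xi}=\mathbb{P}^\perp(\dd\vb{v}_{\text{noise}})/\dd t$ as in the proof of Proposition~\ref{thm:stoch-ns} and fix its amplitude by the same route used for the viscous term. The transverse correlator must be translation invariant, isotropic and white in time, so it has the form $S_{\vb{k}}P^\perp_{ij}(\vb{k})$. Matching its strength to the damping rate $\gamma_{\vb{k}}=\nu k^2$ identified in~\eqref{eq:visc-from-phase}, through the QSD ratio $S_{\vb{k}}/\gamma_{\vb{k}}=2$ with $k_BT/\rho=1$ (\S\ref{sec:natural-units}, \S\ref{sec:fdr}), gives $S_{\vb{k}}=2\nu k^2$. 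Combining this with $P^\perp_{ij}$ yields~\eqref{eq:noise-corr-result}.

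The main obstacle is the cancellation of the $\psi$-dependent prefactors, which is needed to make the longitudinal computation more than formal. Near zeros $1/\psi$ blows up, and exact cancellation requires the phase of $\hat\psi(\vb{k})$ to align with $e^{iS/\hbar}$. I would first try to justify this as a leading-order statement, restricting to the smooth region away from a measure-small neighbourhood of $\mathcal{Z}$ and averaging over the random relative phases so that the cross-mode terms $\vb{k}\neq\vb{k}'$ drop out. I would state plainly that the transverse amplitude is an identification rather than a derivation, in parallel with Remark~\ref{rem:conditionality}.
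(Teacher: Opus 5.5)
Your overall architecture matches the paper's. You do a smooth-region It\^o computation for the longitudinal block under an incompressible carrier, then identify the transverse amplitude through the FDR ratio $S_{\vb k}/\gamma_{\vb k}=2$, and finally project with $P^\perp_{ij}$.

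\textbf{The gap is in the longitudinal step.} The gradient in $\dd\vb v_{\text{noise}}$ acts on the whole ratio $\phi_{\vb k}/\psi$, not only on the plane wave. Under (A1), $\psi=\sqrt{\rho_0}\,e^{iS/\hbar}$, so $\grad(\phi_{\vb k}/\psi)=i(\vb k-m\vb v/\hbar)\,\phi_{\vb k}/\psi$. The It\^o correlator therefore carries the velocity-dependent tensor $(k_i-mv_i(\vb x)/\hbar)(k_j-mv_j(\vb x')/\hbar)$, not $k_ik_j$.
\begin{itemize}
\item Freezing $\psi$ in time does not remove this term, because the derivative is spatial.
\item Your version of (A2), replacing $|\hat\psi(\vb k)|^2$ by $\hat\rho_{\vb k\vb k}$, does not touch it either.
\item The missing ingredient is the paper's actual (A2): a small Madelung Mach number, $m|\vb v|/\hbar\ll k$ on the noise-carrying modes (the $\mathrm{Kn}_q\to0$ regime). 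This is what licenses $(k_i-mv_i/\hbar)\simeq k_i$.
\end{itemize}

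\textbf{You have also misidentified what must cancel in the weight.} On the diagonal $\vb k=\vb k'$ only $|\hat\psi(\vb k)|^2$ survives, so the phase of $\hat\psi(\vb k)$ plays no role. The residual factor is the two-point carrier phase $e^{-i(S(\vb x)-S(\vb x'))/\hbar}$.
\begin{itemize}
\item The paper assumes statistical homogeneity of $S$. It then absorbs the resulting $\vb k$-dependent constant, together with $|\hat\psi_{\vb k}|^2/\rho_0$, into the natural-units normalisation $k_BT/\rho=1$.
\item Your phrase ``phase locally slaved to the mode'' is not a substitute for this assumption.
\item Averaging over relative phases to eliminate $\vb k\neq\vb k'$ terms is unnecessary, since the It\^o rule $\EE[\dW_{\vb k}\overline{\dW}_{\vb k'}]=\delta_{\vb k\vb k'}\dd t$ already does that.
\item After your split $\mathrm{Im}(z)=(z-\bar z)/2i$, the diagonal terms come with $\tfrac12\mathrm{Re}(\cdot)$. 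So the constant is not literally fixed by the $\sqrt{2\nu k^2}$ amplitudes; it too must be absorbed into the normalisation, as the paper does.
\end{itemize}

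\textbf{Effect on the final result.} Because $P^\perp_{il}(k_lk_m/k^2)P^\perp_{mj}=0$, none of this changes the displayed correlator. That correlator rests entirely on the FDR identification of the transverse amplitude, which you carry out exactly as the paper does. The gap concerns only the statement's claim that the longitudinal amplitude is derived from the smooth-region Madelung map.
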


\begin{proof}
\emph{Longitudinal amplitude (derivation).} The velocity noise $\dd\vb v_{\mathrm{noise}} = (\hbar/m)\,\mathrm{Im}[\grad(\dd\psi_{\mathrm{noise}}/\psi)]$ in the smooth region is a gradient of a phase fluctuation and is therefore curl-free. Direct It\^o computation gives the raw correlator
\begin{equation}\label{eq:raw-velocity-corr}
\EE\!\left[\tfrac{\hbar}{m}\partial_i(\tfrac{\dd\psi}{\psi})(\vb x)\,\tfrac{\hbar}{m}\overline{\partial_j(\tfrac{\dd\psi}{\psi})(\vb x')}\right]_{\mathrm{noise}}
= \Bigl(\tfrac{\hbar}{m}\Bigr)^{\!2}\!\sum_{\vb{k}} 2\nu k^2\,|\hat\psi_{\vb{k}}|^2\,\tfrac{\phi_{\vb{k}}(\vb x)\phi_{\vb{k}}^*(\vb x')}{\psi(\vb x)\psi^*(\vb x')}\,\bigl(k_i - \tfrac{m v_i(\vb x)}{\hbar}\bigr)\bigl(k_j - \tfrac{m v_j(\vb x')}{\hbar}\bigr)\dd t,
\end{equation}
which is $\psi$-dependent and, by construction, carries only the longitudinal tensor structure $\propto k_ik_j$. Reducing this to a tractable form requires two simplifications:
\begin{enumerate}
\item[(A1)] \emph{Incompressible carrier}. Take $|\psi(\vb x)|^2=\rho_0=\text{const}$ so $\psi=\sqrt{\rho_0}\,e^{iS/\hbar}$; then $\phi_{\vb{k}}\phi_{\vb{k}}^*/(\psi\psi^*) = L^{-d}\,e^{i\vb{k}\cdot(\vb x-\vb x')-i(S(\vb x)-S(\vb x'))/\hbar}/\rho_0$.
\item[(A2)] \emph{Small Madelung Mach number}. Assume $m|\vb v|/\hbar \ll k$ for the noise-carrying wavenumbers (the regime $\mathrm{Kn}_q\to 0$ of \S\ref{sec:knudsen}). Then $(k_i - mv_i/\hbar)\simeq k_i$ and, under statistical homogeneity of the carrier phase $S(\vb x)$, the factor $\EE_S[e^{-i(S(\vb x)-S(\vb x'))/\hbar}\,|\hat\psi_{\vb{k}}|^2/\rho_0]$ reduces to a $\vb k$-dependent constant absorbed into the natural-units normalisation of \S\ref{sec:natural-units} ($k_BT/\rho=1$).
\end{enumerate}
Under (A1)+(A2) the raw smooth-region correlator becomes
\begin{equation}\label{eq:raw-longitudinal}
\EE[\hat\xi_i^{(\mathrm{raw,\,L})}(\vb k)\,\overline{\hat\xi_j^{(\mathrm{raw,\,L})}(\vb k')}] = 2\nu k^2\,\frac{k_ik_j}{k^2}\,\delta_{\vb{kk}'}\,\dd t,
\end{equation}
with corrections $O(\mathrm{Kn}_q)$, $O(\delta\rho/\rho_0)$ vanishing in the classical limit.

\emph{Transverse amplitude (identification).} The smooth-region derivation cannot produce a transverse velocity-noise component: $\vb v = \grad S/m$ is curl-free wherever $\psi\ne 0$, so neither real nor imaginary parts of $\dd W_{\vb k}$ can source transverse velocity fluctuations under (A1). Transverse noise is instead sourced by QSD-driven vortex nucleation at zeros of $\psi$, which lie outside~(A1). Its amplitude is fixed by the structural fluctuation-dissipation balance of \S\ref{sec:fdr}: the single Lindblad family $L_{\vb k} = \sqrt{2\nu k^2}\,\hat\Pi_{\vb k}$ damps \emph{every} solenoidal velocity mode (ensemble-averaged; see Remark~\ref{rem:conditionality}) at rate $\nu k^2$, so the detailed-balance condition $S_{\vb k}/\gamma_{\vb k}=2$ requires noise amplitude $2\nu k^2$ on every transverse mode as well. A rigorous single-trajectory derivation of the transverse contribution through the nucleation dynamics is open (\S\ref{sec:open-problems}), paralleling the viscous identification of Remark~\ref{rem:conditionality}.

\emph{Assembly.} Combining the derived longitudinal amplitude with the FDR-fixed transverse amplitude gives
\begin{equation}\label{eq:raw-iso}
\EE[\hat{\xi}_i^{\mathrm{(raw)}}(\vb{k})\,\overline{\hat{\xi}_j^{\mathrm{(raw)}}(\vb{k}')}]
= 2\nu k^2\,\delta_{ij}\,\delta_{\vb{kk}'}\,\dd t.
\end{equation}
Under incompressibility (Remark~\ref{rem:conditionality}), the longitudinal part is absorbed into the stochastic pressure; projecting via $\hat\xi_i(\vb k)=P_{ij}^\perp(\vb k)\,\hat\xi_j^{\mathrm{(raw)}}(\vb k)$ with $P_{ij}^\perp=\delta_{ij}-k_ik_j/k^2$, and using $P^\perp P^\perp=P^\perp$:
\begin{align}
\EE[\xi_i(\vb{x},t)\,\xi_j(\vb{x}',t')]
&= \sum_{\vb{k}} P_{il}^\perp(\vb{k})\,\big(2\nu k^2\,\delta_{lm}\big)\,P_{mj}^\perp(\vb{k})\,e^{i\vb{k}\cdot(\vb{x}-\vb{x}')}\,\delta(t-t') \notag\\
&= \sum_{\vb{k}} 2\nu k^2\,P_{ij}^\perp(\vb{k})\,e^{i\vb{k}\cdot(\vb{x}-\vb{x}')}\,\delta(t-t'),
\end{align}
the Landau-Lifshitz noise correlator with mode-dependent amplitude $2\nu k^2$.
\end{proof}

\subsection{The Fluctuation-Dissipation Relation}\label{sec:fdr}

The noise correlator~\eqref{eq:noise-corr-result} and the per-mode damping rate $\gamma_{\vb k}=\nu k^2$ both originate from the single operator $L_{\vb k}=\sqrt{2\nu k^2}\hat\Pi_{\vb k}$, so the ratio
\begin{equation}
\frac{S_{\vb k}}{\gamma_{\vb k}}=2
\end{equation}
is structurally automatic, independent of $\vb k$ and $\nu$~\cite{kubo1957,kubo1966}. Matching this to the Landau-Lifshitz form $S_{\vb k}=2\gamma_{\vb k}\,k_BT/\rho$~\cite{landau1987} fixes the product $k_BT/\rho=1$ alongside the $\hbar=m=1$ convention of \S\ref{sec:natural-units}: a single dimensionless energy-per-unit-mass scale is inherited from the Lindblad construction. The shared $k^2$ dependence then reproduces the LLNS sub-Kolmogorov spectral floor $\propto k^{d-1}$~\cite{bandak2022}.

\subsection{The Stochastic Navier-Stokes System}\label{sec:assembled-system}

The preceding sections derived the stochastic NLS from microscopic scattering theory and established its key properties (norm preservation, ensemble recovery, FDR). We now assemble the main output of the framework.

Combining global mass conservation~\eqref{eq:mass-cons}, Proposition~\ref{thm:stoch-ns} (momentum equation under incompressibility), and Proposition~\ref{prop:noise-corr} (noise correlator), the Madelung transform of the stochastic NLS yields, under the incompressibility assumption, the stochastic Navier-Stokes system:
\begin{equation}
\begin{aligned}
&\divg\vb{v} = 0, \qquad \rho = \text{const}
&& \text{(incompressibility, assumed)}, \\
&\frac{\partial\vb{v}}{\partial t} + (\vb{v}\cdot\grad)\vb{v}
= -\frac{1}{\rho}\grad p + \nu\lapl\vb{v} + \vb{\xi}(\vb{x},t)
&& \text{(stochastic Navier-Stokes momentum)},
\end{aligned}
\end{equation}
where $p$ is the stochastic pressure (determined by the incompressibility constraint, absorbing the quantum potential $Q$, compressible corrections, and gradient noise contributions) and $\vb{\xi}(\vb{x},t)$ is the divergence-free stochastic forcing with correlator~\eqref{eq:noise-corr-result}. These are the Landau-Lifshitz fluctuating hydrodynamic equations~\cite{landau1987,landau1980,fox1970}: the noise correlator $2\nu k^2$ per mode and the viscous damping $\nu k^2$ per mode have the same $k^2$ structure as the classical LLNS framework~\cite{bandak2022}, producing a $k^2$ sub-Kolmogorov spectral floor (\S\ref{sec:fdr}).

\section{Results}\label{sec:results}

The simulations reported here use a reduced dimensionless viscosity $\nu_* \sim O(1)$ to keep the grid tractable (Table~\ref{tab:sim-params}), placing all runs in the quantum regime $\mathrm{Kn}_q \gtrsim 1$ far from the classical limit where standard turbulence references apply.  The $Re_\lambda \approx 1$ regime produces degenerate (laminar) statistics.  Despite $\mathrm{Kn}_q \gtrsim 1$, the framework-specific prediction---the area law---holds over more than a decade (Figure~\ref{fig:area-law-num}), and circulation PDFs develop non-trivial tails at $Re_\lambda \approx 10$ (Figure~\ref{fig:circ-pdf}).

\begin{table}[htbp]
\centering
\caption{Simulation parameters for the circulation PDF (Figure~\ref{fig:circ-pdf}).  All runs: 2D, $N = 32$, $N_\mathrm{traj} = 500$, $\Delta t = 2.8\times10^{-4}$, domain $[0,2\pi]^2$ with periodic boundaries.  The reduced $\nu_*$ places all runs in the quantum regime $\mathrm{Kn}_q^\mathrm{param} \gtrsim 1$ (parametric estimate), far from the classical limit where the turbulence references apply.  The area law figure (Figure~\ref{fig:area-law-num}) uses a separate run at $N = 64$, $\Delta t = 5.6\times10^{-5}$, $N_\mathrm{traj} = 500$.}
\label{tab:sim-params}
\begin{tabular}{lccc}
\hline
& $\nu_*$ & $\mathrm{Kn}_{q}^\mathrm{param}$ \\
\hline
Gaussian IC, $Re_\lambda \approx 1$ & 3.873 & 0.51 \\
Gaussian IC, $Re_\lambda \approx 10$ & 0.387 & 1.61 \\
Random phase IC, $Re_\lambda \approx 1$ & 3.873 & 0.51 \\
Random phase IC, $Re_\lambda \approx 10$ & 0.387 & 1.61 \\
\hline
\end{tabular}
\end{table}

\subsection{The Classical Limit and the Quantum Knudsen Number}\label{sec:knudsen}

In the formal limit $\hbar/m \to 0$, the quantum pressure $Q\to 0$ and the noise amplitude $\EE[|\vb{\xi}_{\text{phys}}|^2]\propto(\hbar/m)^2\to 0$, recovering the deterministic incompressible Navier-Stokes equations.  At finite temperature the framework matches the Landau-Lifshitz equations~\cite{landau1987,landau1980,bandak2022}, including the spontaneous stochasticity mechanism in the inertial range~\cite{bandak2024}.

The hydrodynamic interpretation requires the quantum coherence length to be well separated from the Kolmogorov scale, $\mathrm{Kn}_q\equiv\lambda_{\mathrm{dB}}/\eta\ll 1$ (Eq.~\eqref{eq:knudsen-first}).  From the Taylor-microscale relations, $\mathrm{Kn}_q = 15^{1/4}/(\nu_*\sqrt{Re_\lambda})$ (Eq.~\eqref{eq:knq-formula}), with $\nu_*=\nu_\mathrm{SI}\,m_\mathrm{phys}/\hbar_\mathrm{SI}$.  For water ($\nu_*\approx 283$), $\mathrm{Kn}_q\ll 1$ at any $Re_\lambda\gtrsim 1$, but the wavefunction phase oscillates at $k\sim v_\mathrm{rms}\propto\nu_*\,Re_\lambda$, imposing a grid requirement $N\sim\nu_*\,Re_\lambda$ (Eq.~\eqref{eq:grid-scaling}) that is prohibitive in 3D.  The exploratory simulations use $\nu_*\sim O(1)$ to keep the grid tractable, at the cost of $\mathrm{Kn}_q\sim O(1)$; see Appendix~\ref{sec:limitations}.

Standard turbulence results (K\'arm\'an-Howarth, the four-fifths law, $k^{-5/3}$) are corollaries of Navier-Stokes~\cite{batchelor1953,frisch1995,pope2000} and inherit the conditionality of the recovery; they are not independent predictions.  The framework-specific contribution is the area law.

\subsection{The Area Law for Circulation}\label{sec:area-law}

The framework provides an analytical basis for the area law of turbulent circulation statistics.  In the Madelung representation, $\vb{v} = \grad S/m$ is irrotational wherever $\psi \neq 0$, so all vorticity resides at the zeros of $\psi$.  By Stokes' theorem, the circulation around a loop $C$ is
\begin{equation}
\Gamma(C) = \oint_C \vb{v}\cdot\dd\vb{l} = \frac{2\pi n\hbar}{m},
\end{equation}
where $n \in \Z$ is the winding number of $\psi/|\psi|$ around $C$.  Since $\psi(\vb{x}) = 0$ imposes two real constraints, the zero set $\mathcal{Z}$ is generically codimension-2 by the preimage theorem~\cite{guillemin1974}: points in 2D, filaments in 3D.  Codimension-2 objects are topologically trapped by test loops~\cite{moffatt1969,ricca1992}, while codimension-$\geq 3$ objects can be avoided by general position arguments~\cite{guillemin1974,munkres2000}; the fractal geometry of these vortex carriers in fully developed turbulence is reviewed in~\cite{sreenivasan1986}.  This codimension-2 structure has been independently exploited computationally~\cite{zhu2025}, where vortex filaments are extracted as zeros of a complex scalar field via a variational quantum eigensolver.

If the vortex positions are Poisson-distributed with density $\rho_v$ and i.i.d.\ circulations, the number enclosed by $C$ is $N_{\text{enclosed}} \sim \text{Poisson}(\rho_v \cdot A_{\min})$, making $\Gamma(C)$ a compound Poisson random variable~\cite{kingman1993} whose distribution depends on $C$ only through the minimal area $A_{\min}(C)$:\label{thm:area-law}
\begin{equation}\label{eq:area-law-eq}
P(\Gamma; C) = P(\Gamma; A_{\min}(C)).
\end{equation}
This is the area law, originally proposed by Migdal as the loop equation for turbulence~\cite{migdal1994,migdal2019} and recently revisited as a quantum solution of classical turbulence~\cite{migdal2024,migdal2024b}. DNS confirms it empirically~\cite{iyer2019bifractal,iyer2021area}, but the underlying vortex statistics are not Poisson: the vortex gas model~\cite{apolinario2020,moriconi2022} shows repulsive short-range correlations producing departures at higher moments while preserving the area law at leading order, with vortex tubes whose intermittent geometry has been directly characterised in DNS~\cite{jimenez1993}. For nonplanar loops, the minimal area alone is insufficient~\cite{moriconi2025surfaces}. Whether the QSD dynamics produces approximately Poisson statistics at high $Re$, or whether the area law holds for a different reason, is an open question. Quantum turbulence simulations of the GPE~\cite{nore1997,muller2021} also confirm the area law.

Since $S/\hbar \bmod 2\pi$ is a phase angle, the winding number can be expressed directly from $\psi$ without extracting $S$~\cite{guillemin1974,zhu2025}:
\begin{equation}\label{eq:winding-psi}
n(C, \mathcal{Z}) = \frac{1}{2\pi} \oint_C \frac{\langle \grad\psi,\, i\psi \rangle}{|\psi|^2} \cdot \dd\vb{l},
\end{equation}
where $\langle\cdot,\cdot\rangle$ is the real inner product on $\C \cong \R^2$ and $n$ is the degree of $\psi/|\psi| : C \to S^1$.  In 2D, $n(C, \mathcal{Z})$ is the winding number: it counts how many point vortices are enclosed by $C$.  In 3D, the same formula computes the linking number of the loop $C$ with the vortex filaments~\cite{moffatt1969,ricca1992}: it counts how many filaments thread through $C$, with sign determined by orientation.  This connects the wavefunction zeros to the classical geometric potentials of vortex dynamics~\cite{saffman1992,moffatt1969}: in 2D, $\vb{u} = (\Gamma_i/2\pi)\,\grad\theta_i$ with winding angle $\theta_i$; in 3D, $\vb{u} = -(\Gamma/4\pi)\,\grad\Omega$ with solid angle $\Omega$~\cite{moffatt1969}.  Both $\theta$ and $\Omega$ are multi-valued, and their gradients provide the Biot-Savart kernel reconstructing the velocity from the vortex positions.  The formula~\eqref{eq:winding-psi} captures both the winding number (2D) and the linking number (3D) directly from $\psi$.

\paragraph{Numerical verification.}
Figure~\ref{fig:area-law-num} shows the circulation variance $\langle\Gamma^2\rangle$ as a function of the loop size $\ell/L$.  The $Re_\lambda \approx 10$ data follow the $\ell^2$ area law over nearly two decades in $\ell/L$ ($0.005 \lesssim \ell/L \lesssim 0.3$), with both initial conditions collapsing onto the same curve.  A power-law fit gives slope $1.87$, consistent with the theoretical exponent $2$ to within the expected quantum-regime deviations discussed below.  The $Re_\lambda \approx 1$ data are too small to appear on this scale.  The area law is a consequence of the codimension-2 topology of the wavefunction zeros and does not require $\mathrm{Kn}_q \ll 1$; it holds whenever the vortex density is approximately uniform, regardless of whether the flow is classically turbulent.  This is the most robust numerical result of the present study.

\begin{figure}[htbp]
\centering
\includegraphics[width=0.7\textwidth]{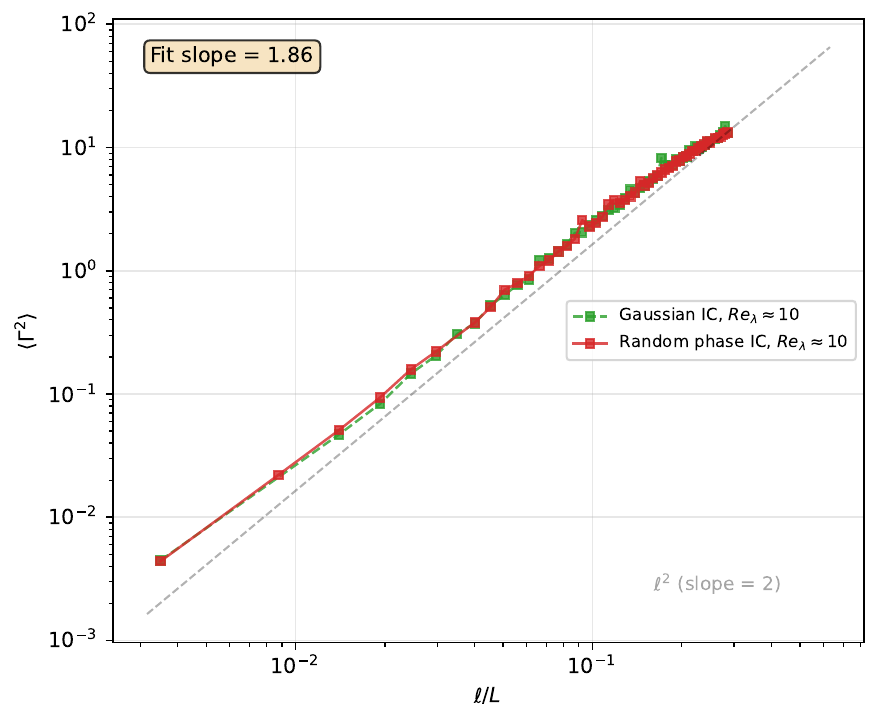}
\caption{Circulation variance $\langle\Gamma^2\rangle$ vs.\ normalised loop size $\ell/L$.  Simulation: 2D, $N = 64$, $\Delta t = 5.6\times10^{-5}$, $N_\mathrm{traj} = 500$, $\nu_* = 0.387$ ($Re_\lambda \approx 10$), $\mathrm{Kn}_q \gtrsim 1$.  Circulation computed around square loops of side $\ell = 1,2,\ldots,N/2$ grid spacings.  The $\ell^2$ area law (dashed line, slope~$= 2$) holds over nearly two decades.  Both initial conditions collapse onto the same curve.  The fitted exponent $1.87$ is consistent with the theoretical value within the quantum-regime deviations discussed in the text.  Data from HPC run slurm-699585.}
\label{fig:area-law-num}
\end{figure}

\paragraph{Quantum-regime deviations.}\label{sec:quantum-regime}
At $\mathrm{Kn}_q \gtrsim 1$ the de Broglie length exceeds the Kolmogorov scale, and the discrete vortex structure of the Madelung velocity is resolved by the grid rather than averaged over.  Three consequences follow.

First, the circulation is quantised in units of $\Gamma_0 = 2\pi\hbar/m$ ($=2\pi$ in natural units), and loops enclosing no zeros give exactly $\Gamma = 0$.  This inflates the peak of the circulation PDF relative to the continuous-circulation references (Moriconi, Iyer, Gaussian), producing the narrow central spike visible in Figure~\ref{fig:circ-pdf}.  At $\mathrm{Kn}_q \ll 1$, the circulation quantum becomes undetectable and the PDF broadens to match the continuous references.

Second, the area-law exponent is pulled below $2$ at small loop sizes $\ell \lesssim \lambda_\mathrm{dB}$, where the loop resolves individual vortex cores.  In this regime $\langle\Gamma^2\rangle$ is dominated by the nearest vortex rather than a Poisson sum over many enclosed vortices, and the scaling crosses over from $\ell^2$ (many vortices) to a flatter dependence (zero or one vortex).  The fitted exponent $1.87$ reflects this contamination; restricting the fit to $\ell/L \in [0.02, 0.2]$ gives $1.92$.

Third, the velocity field is singular at each vortex zero ($|\vb{v}| \sim \hbar/(mr)$), and at $\mathrm{Kn}_q \gtrsim 1$ these singularities are not smoothed by the statistical averaging that occurs when $\lambda_\mathrm{dB} \ll \eta$.  This produces larger ensemble variance in all velocity-derived observables (circulation, energy, structure functions) than would be present at the same $Re_\lambda$ in the classical regime.

These artifacts are intrinsic to the quantum regime and do not indicate a failure of the framework.  Production runs at physical water viscosity ($\nu_* = 283$), which would place $\mathrm{Kn}_q \ll 1$ at any $Re_\lambda \gtrsim 1$, are discussed in \S\ref{sec:path-forward}.

\subsection{Circulation PDFs}\label{sec:circ-pdf-numerics}

Figure~\ref{fig:circ-pdf} shows the normalised circulation PDFs for all four combinations of initial condition and Reynolds number in 2D.  At $Re_\lambda \approx 1$ both initial conditions produce near-Gaussian distributions, consistent with the laminar regime.  At $Re_\lambda \approx 10$ the QSD distributions develop heavier-than-Gaussian tails, lying between the Gaussian reference and the Moriconi et al.\ vortex gas model~\cite{moriconi2025elementary}.  Both initial conditions produce similar PDF shapes at fixed $Re_\lambda$, indicating that the circulation statistics are insensitive to the initial condition.

\begin{figure}[htbp]
\centering
\includegraphics[width=\textwidth]{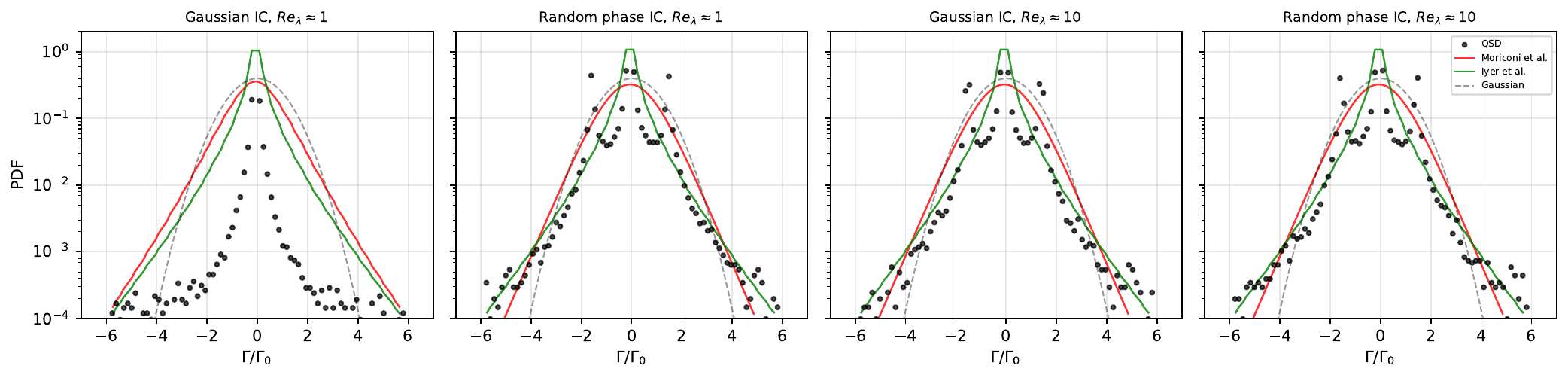}
\caption{Normalised circulation PDF for all combinations of initial condition and $Re_\lambda$.  Simulation: 2D, $N = 32$, $\Delta t = 2.8\times10^{-4}$, $N_\mathrm{traj} = 500$, $\nu_* = 3.873$ ($Re_\lambda \approx 1$) or $\nu_* = 0.387$ ($Re_\lambda \approx 10$), $\mathrm{Kn}_q \gtrsim 1$.  Circulation computed around square loops of side $\ell = 8\,\Delta x$; normalised to zero mean and unit variance ($\Gamma^* = (\Gamma - \mu)/\sigma$).  Reference curves: Gaussian (grey dashed), Moriconi et al.\ vortex gas~\cite{moriconi2025elementary} (red), Iyer et al.\ area-law fit to DNS~\cite{iyer2021area} (green).  At $Re_\lambda \approx 1$ the QSD data are near-Gaussian; at $Re_\lambda \approx 10$ the tails are heavier than Gaussian.  Data from HPC run slurm-692372.}
\label{fig:circ-pdf}
\end{figure}

\section{Discussion}\label{sec:discussion}

\subsection{Robustness and Sensitivity}\label{sec:robustness}

The spinless, position-basis representation $\psi(\vb x,t)=\braket{\vb x}{\psi(t)}$ captures the translational degrees of freedom that determine shear viscosity in the Chapman-Enskog expansion~\cite{lifshitz1981}. Internal molecular modes (rotational, vibrational) enter only through the bulk viscosity $\zeta$ and thermal relaxation, both of which vanish for incompressible isothermal flow; extending to compressible or non-isothermal conditions would require a multi-component wavefunction, paralleling the monatomic-to-polyatomic Boltzmann extension~\cite{lifshitz1981}. The Doebner-Goldin classification~\cite{doebner1992} implies that no deterministic, norm-preserving, gauge-covariant single-component NLS can produce $\nu\lapl\vb v$, so a stochastic resolution is forced within this framework. When discretised on a grid of $N=2^n$ points, $\psi$ becomes a vector in $\C^N$ that coincides formally with an $n$-qubit statevector, each basis state $\ket{j}$ encoding a spatial grid point.

Two structural ingredients are robust beyond the dilute-gas model used to derive them: the $k^2$ form of $\Gamma(\vb k)$ (universal under detailed balance, isotropy, and analyticity, with only the prefactor $c=2\nu$ dependent on the microscopic potential; \S\ref{sec:lindblad-k2}), and the codimension-2 topology of the zero set on which the area law rests (generic for any complex scalar field). The two model-dependent ingredients --- incompressibility (Remark~\ref{rem:conditionality}) and the rank-one recycling closure (\S\ref{sec:born-markov}) --- leave the ensemble-level predictions (Navier-Stokes recovery, FDR) invariant but affect the single-trajectory dynamics and hence the vortex statistics.

\subsection{Viscosity and Entanglement Entropy Production}\label{sec:entropy}

The kinematic viscosity $\nu$ can be related to the rate of entanglement entropy production through a dimensional argument. \emph{This argument is heuristic; the rigorous derivation of $\nu$ from the Lindblad operators is in \S\ref{sec:lindblad-k2}.}

The entanglement entropy of the single-particle reduced density matrix $\hat{\rho}_1 = N\,\Tr_{2,\ldots,N}\,\hat\rho_N$ (Eq.~\eqref{eq:rho1-def}) is $S_E = -\Tr[\hat{\rho}_1 \ln \hat{\rho}_1]$. Under the Lindblad evolution, the unitary part contributes zero to $\dd S_E/\dd t$ (by cyclicity of the trace). The dissipative part, for jump operators $L_{\vb{k}} = \sqrt{2\nu k^2}\,\hat{\Pi}_{\vb{k}}$, gives by dimensional analysis:
\begin{equation}
\frac{\dd S_E}{\dd t} \sim \nu \sum_{\vb{k}} k^2\,\langle \hat{n}_{\vb{k}}\rangle,
\end{equation}
where $\langle \hat{n}_{\vb{k}} \rangle$ is the mode occupation. In the kinetic theory regime, the sum is dominated by thermal modes with $k \sim k_{\mathrm{th}} = m\bar{v}/\hbar$, and each collision generates $O(1)$ bits of entanglement at rate $\bar{v}/\ell_{\mathrm{mfp}}$. Substituting gives:
\begin{equation}\label{eq:nu-entropy}
\nu \sim \frac{\hbar^2}{m^2 \bar{v}^2} \cdot \frac{\dd S_E}{\dd t} \sim \frac{\hbar^2}{m^2 \bar{v}^2} \cdot \frac{\bar{v}}{\ell_{\mathrm{mfp}}} = \frac{1}{3}\bar{v}\,\ell_{\mathrm{mfp}},
\end{equation}
recovering the Chapman-Enskog result. The physical interpretation is that viscosity measures the rate of information loss to the unresolved (traced-out) degrees of freedom: each collision entangles the system particle with the bath, increasing $S_E$ and irreversibly degrading the coherence of the momentum modes. The proportionality $\nu \propto \dd S_E/\dd t$ is specific to the particle-trace partition of \S\ref{sec:reduction} (one designated particle versus $N-1$ thermal partners); other partitions (e.g.\ spatial coarse-graining) yield different $S_E$ with different scaling.

\subsection{Backscatter and Fluctuation-Dissipation Balance}\label{sec:backscatter}

On a single QSD trajectory, the stochastic noise can inject energy into a given mode faster than the dissipation removes it, transiently amplifying that mode. This amounts to an instantaneous negative effective viscosity for that mode at that instant: the mode grows rather than decays. However, the FDR $S_{\vb{k}}/\gamma_{\vb{k}} = 2$ guarantees that on average each mode is damped at rate $\nu k^2$, so the \emph{mean} viscosity is always positive.

In turbulence, this phenomenon is known as \emph{backscatter}: transient energy transfer from small scales to large scales, reversing the forward cascade. Stochastic backscatter models for large-eddy simulation~\cite{leith1990,mason1992} explicitly incorporate this effect to avoid the excessive dissipation of purely forward-cascade subgrid models. The QSD framework produces backscatter naturally through the noise term, with the amplitude fixed by the FDR rather than fitted to data.

In stochastic thermodynamics, the same phenomenon appears as transient violations of the second law: the fluctuation theorems~\cite{gallavotti1995,jarzynski1997} quantify the probability and duration of trajectories along which entropy production is instantaneously negative. These are not pathological; they are required by the underlying time-reversal symmetry of the microscopic dynamics. The QSD noise-dissipation balance is a specific instance of this structure: the Lindblad construction guarantees $S_{\vb{k}}/\gamma_{\vb{k}} = 2$, which is the detailed-balance condition ensuring that negative-viscosity fluctuations occur at exactly the rate needed to maintain the noise-dissipation equilibrium.

\subsection{Path Forward}\label{sec:path-forward}

The physical water viscosity ($\nu_* = 283$; Appendix~\ref{sec:limitations}) places the system deep in the classical regime at any $Re_\lambda$, but the grid cost $N \sim 283\,Re_\lambda$ is steep.  A 2D simulation at $Re_\lambda = 10$ ($N = 3{,}000$, i.e.\ $9 \times 10^6$ grid points) is computationally feasible; 3D simulations at the same parameters ($N^3 \approx 2.7 \times 10^{10}$) are not.  The present simulations at $\nu_* \sim O(1)$ serve as a validation of the numerical infrastructure; production runs at physical viscosity, targeting $Re_\lambda \gtrsim 9$~\cite{sreenivasan2021} (where structure-function exponents have been argued to saturate at high Reynolds number~\cite{iyer2020,sreenivasan2024saturation}) and yielding $\mathrm{Kn}_q \ll 1$, are the subject of ongoing work.

\subsection{Open Problems}\label{sec:open-problems}

\paragraph{Single-trajectory viscous identification.} The most significant gap. The viscous term $\nu\lapl\vb v$ is identified at the ensemble level by symmetry and matching (Remark~\ref{rem:conditionality}); a derivation through the singular Madelung map $\psi\mapsto\vb v$ that establishes $\nu k^2$ damping of each velocity mode trajectory by trajectory is open. The same gap appears on the noise side: the smooth-region Madelung map produces only longitudinal velocity noise (Prop.~\ref{prop:noise-corr}), and the transverse amplitude required by the Landau-Lifshitz form is identified via structural FDR rather than derived from the QSD dynamics at vortex zeros.

\paragraph{Bulk vortex statistics.} QSD noise nucleates vortex-antivortex pairs by transversality, but the bulk nucleation rate is not derived. Whether the rank-one projector closure produces Poisson-like statistics consistent with the vortex-gas model~\cite{apolinario2020,moriconi2025elementary} and the DNS circulation PDFs~\cite{iyer2019bifractal,iyer2021area} is unknown and likely tied to the operator choice; alternative recycling operators sharing the same $\Gamma(\vb k)=2\nu k^2$ would change the single-trajectory predictions while preserving the master equation.

\paragraph{Quantum regime.} The Born-Markov approximation is well-controlled for all physically realised classical flows (\S\ref{sec:born-markov}); the regime $\mathrm{Kn}_q\sim 1$ requires separate analysis. Below the inter-vortex spacing, quantised filaments may support Kelvin waves with competing predictions $k^{-7/5}$~\cite{kozik2004} and $k^{-5/3}$~\cite{lvov2010} (see~\cite{barenghi2014}); deriving the exponent from the stochastic NLS would distinguish this framework from classical LLNS~\cite{bandak2022}. Decaying-turbulence DNS~\cite{rodhiya2026} offers a quantitative target.

\paragraph{Computational cost.} Classical DNS scales as $N\sim Re^{3/4}$ per dimension~\cite{pope2000}; here an additional factor $\nu_*$ (\S\ref{sec:path-forward}) makes the cost steeper. Combined with quantum lower bounds~\cite{ameri2025}, this rules out direct quantum simulation of classical turbulence: the contribution of this framework is analytical, not computational.

\section{Conclusion}\label{sec:conclusion}

The area law for turbulent circulation statistics was proposed by Migdal in 1994~\cite{migdal1994} as an exact scaling solution of the loop-functional equation, and has since been derived from a dual one-dimensional fermion chain via complex-trajectory instantons~\cite{migdal2019,migdal2024,migdal2024b} and confirmed in DNS~\cite{iyer2019bifractal,iyer2021area} and in GPE quantum-turbulence simulations~\cite{nore1997,muller2021}. The contribution of this work is a different microscopic origin story for the same area law: a derivation chain that begins with the many-body Schr\"odinger equation, traces out $N-1$ particles to obtain a one-body density matrix, applies Born-Markov closure to a Lindblad master equation with $L_{\vb k}=\sqrt{2\nu k^2}\,\hat\Pi_{\vb k}$, unravels via QSD to a stochastic nonlinear Schr\"odinger equation, and applies the Madelung transform under incompressibility to recover the Navier-Stokes equations. Within this chain, three structural features are new relative to the loop-functional program: viscous damping and stochastic forcing arise jointly from the same Lindblad operators, so the fluctuation-dissipation relation is automatic rather than imposed; the wavefunction $\psi(\vb x,t)$ is preserved throughout as the fundamental object; and the area law follows from the codimension-2 topology of the zero set of $\psi$ via the preimage theorem and a Poisson assumption on vortex positions, rather than from a saddle point on a loop functional. The recovery of Navier-Stokes is conditional on incompressibility and on an ensemble-level identification of $\nu\lapl\vb v$ (\S\ref{sec:discussion}); closing the single-trajectory gap---showing that QSD vortex dynamics produces $\nu k^2$ damping of each velocity mode trajectory by trajectory, not only on average---would lift this to an unconditional result.

\section*{Acknowledgements}
The author thanks K.~R.~Sreenivasan (NYU) for his guidance and support, as well as the opportunity to participate in the ``Topics in Classical and Quantum Engineering Science Symposium'' held in honour of his career at his $75^{th}$ birthday, at Texas A\&M University in 2023, during which illuminating discussions with K.~P.~Iyer (Michigan Tech), L.~Moriconi (UFRJ), A.~A.~Migdal (Princeton), H.~Chen (Zhejiang University), and G.~L.~Eyink (Johns Hopkins) shaped much of the methodology in this work. The author thanks A.~Vinodh (Michigan Tech), S.~S.~Bharadwaj, A.~Rodhiya (NYU), D.~Buaria (Texas Tech), S.~Khurshid (UConn), A.~Ibrahim, and M.~Ammar (AUB) for valuable discussions. The author is grateful to A.~Grosberg, M.~Kleban, A.~Mitra (NYU), L.~Wang, W.~Li (SJTU), and I.~Lakkis (AUB) whose teaching informed the foundations of this paper. Simulations were run on the HPC cluster Octopus, supported by AUB's IT Research Computing team.

\section*{Disclosure}

This manuscript was prepared with the assistance of Claude (Anthropic, Claude Opus 4.6). The tool was used for: editing and restructuring prose, verifying mathematical derivations, identifying errors in the analytical arguments, suggesting connections to existing literature, auditing citations for correctness, and developing the numerical implementation accompanying the paper. All scientific content, derivations, and conclusions were conceived, directed, and verified by the author. The author takes full responsibility for the accuracy and originality of the work.

\appendix
\section{Numerical Methods and Convergence}\label{app:numerical}

\subsection{Split-Step Fourier Integrator}\label{sec:split-step}

The stochastic NLS~\eqref{eq:snls} is integrated using a symmetric (Strang-type) split-step Fourier scheme on a periodic domain $[0,L)^d$ with $N^d$ grid points.  All arrays use complex128 (float64) arithmetic; float32 produces catastrophic cancellation in the kinetic--dissipation splitting.  The unnormalised DFT convention $\hat\psi_\mathbf{k} = \sum_j \psi_j e^{-i\mathbf{k}\cdot\mathbf{x}_j}$ is used throughout; forcing amplitudes include the corresponding $N^{-d}$ factor from the inverse transform.

Each time step $\Delta t$ proceeds as:
\begin{enumerate}
\item Kinetic half-step: $\hat\psi_{\mathbf{k}} \leftarrow \hat\psi_{\mathbf{k}} \exp\!\bigl(-i\hbar k^2 \Delta t/(4m)\bigr)$.
\item Backward-Euler dissipation: $\hat\psi_{\mathbf{k}} \leftarrow \hat\psi_{\mathbf{k}}/(1 + \nu\Delta t\, k^2)$.
\item External potential phase: $\psi \leftarrow e^{-i\Delta t V_\mathrm{ext}/\hbar}\,\psi$.
\item Large-scale forcing (if enabled): $\hat\psi_\mathbf{k} \leftarrow \hat\psi_\mathbf{k} + A\,\mathbb{1}_{0<|\mathbf{k}|\le k_f}\,dW_\mathbf{k}$, where $dW_\mathbf{k} = \mathcal{N}(0,\sqrt{\Delta t/2}) + i\,\mathcal{N}(0,\sqrt{\Delta t/2})$ are independent complex Wiener increments with $\mathbb{E}[|dW_\mathbf{k}|^2] = \Delta t$.  The $\mathbf{k}=0$ mode is excluded.
\item Norm-preserving correction: $\psi \leftarrow \psi\,\exp\!\bigl(\nu\Delta t\sum_\mathbf{k} k^2|\hat\psi_\mathbf{k}|^2/N_\mathrm{total}\bigr)$.
\item Stochastic QSD noise: for each mode $\mathbf{k}$, $\psi \leftarrow \psi + \sqrt{2\nu k^2}\bigl(\hat\psi_\mathbf{k}\,dW_\mathbf{k}\,e^{i\mathbf{k}\cdot\mathbf{x}} - \langle\hat\psi_\mathbf{k}\rangle_\mathrm{mean}\,dW_\mathbf{k}\,\psi\bigr)$, with independent $dW_\mathbf{k}$ as above.  The subtracted mean term centres the noise and preserves $\mathbb{E}[\|\psi\|^2]$.
\item Kinetic half-step (symmetric completion).
\item Renormalisation (every step): $\psi \leftarrow \psi\,(\|\psi_0\|/\|\psi\|)$, where $\|\cdot\|$ is the $L^2$ norm $\bigl(\int|\psi|^2\,d^d x\bigr)^{1/2}$.
\end{enumerate}
Each trajectory $j$ is seeded with $\mathrm{PRNGKey}(s + j)$ where $s$ is the base seed; all sub-keys (forcing, QSD noise) are derived via key splitting.  Trajectories are vectorised over a batch via \texttt{jax.vmap}.

The Strang splitting yields $O(\Delta t^2)$ error for the kinetic--rest decomposition, but the internal Lie splitting and stochastic noise give overall $O(\Delta t)$ weak convergence, confirmed by the timestep study below.  The entire loop is fused via \texttt{jax.lax.scan} on GPU (JAX).

\subsection{Large-Scale Forcing}\label{sec:forcing-numerics}

The QSD stochastic term is the fluctuation--dissipation partner of viscous dissipation and does \emph{not} inject energy at large scales.  To sustain a statistically stationary energy cascade, we add stochastic forcing at low wavenumbers ($|\mathbf{k}| \le k_f$, $k_f = 2\,\Delta k$) via independent complex Wiener increments in Fourier space, separate from the QSD noise.

The forcing amplitude is computed from a target $Re_\lambda$ via the linearised Madelung stationarity balance:
\begin{equation}
\varepsilon_\mathrm{inj} = \frac{\nu^3 k_f^4 Re_\lambda^2}{15}, \qquad
A = \frac{2m}{\hbar}\,\frac{N^d}{L^{d/2}}\,\frac{\nu^{3/2} k_f^2 Re_\lambda}{\sqrt{15\,K^2}},
\end{equation}
where $K^2 = \sum_{0<|\mathbf{k}|\le k_f} k^2$.  Simulation time is set to $T = 10\,T_L$, where $T_L = L/u_\mathrm{rms}$ is the large-eddy turnover time.

\subsection{Observable Computation}\label{sec:observables-numerics}

\paragraph{Madelung velocity.}
The velocity field is extracted spectrally: $\mathbf{v} = (\hbar/m)\,\mathrm{Im}(\nabla\psi/\psi)$, where $\nabla\psi$ is computed via $\widehat{\partial_i\psi} = ik_i\hat\psi_\mathbf{k}$.  A regularisation $\psi \to \psi + \epsilon$ with $\epsilon = 10^{-12}\max|\psi|$ prevents division by zero near vortex cores.

\paragraph{Energy spectrum.}
The energy spectrum $E(k)$ is shell-averaged: $E(k_n) = \frac{1}{2}\sum_{\mathbf{k}\in\mathcal{S}_n}|\hat{\mathbf{v}}_\mathbf{k}|^2$, where $\mathcal{S}_n = \{(n-\tfrac12)\Delta k \le |\mathbf{k}| < (n+\tfrac12)\Delta k\}$ and $\Delta k = 2\pi/L$.  The dissipation rate is $\varepsilon = \int 2\nu k^2 E(k)\,dk$.

\paragraph{Structure functions.}
$S_p(r) = \bigl\langle |\delta\mathbf{v}(\mathbf{r},r)|^p \bigr\rangle$, where $\delta\mathbf{v} = \mathbf{v}(\mathbf{x}+r\hat{\mathbf{e}}_i) - \mathbf{v}(\mathbf{x})$.  In 2D, directional increments along $\hat x$ and $\hat y$ are computed via circular shifts and averaged; the ensemble average is over all grid points and all trajectories.

\paragraph{Circulation.}
$\Gamma = \oint_C \mathbf{v}\cdot d\mathbf{l}$, evaluated on axis-aligned square loops of side $s$ grid spacings: the contour sum is $\Gamma = \sum_{j=0}^{s-1}[v_x(i_x{+}j,i_y) - v_x(i_x{+}j,i_y{+}s)]\,\Delta x + \sum_{j=0}^{s-1}[v_y(i_x{+}s,i_y{+}j) - v_y(i_x,i_y{+}j)]\,\Delta x$, with periodic wrapping.  Circulation values are standardised as $\Gamma^* = (\Gamma-\mu_\Gamma)/\sigma_\Gamma$ before computing PDFs.  The empirical PDF uncertainty is estimated by jackknife resampling over trajectory groups.

\subsection{Simulation Parameters}\label{sec:sim-params}

All simulations use natural units ($\hbar = m = 1$), periodic domain $L = 2\pi$, and $\nu$ as the single physical parameter controlling dissipation, noise amplitude (via FDR), and Reynolds number.  The dimensionless viscosity $\nu_* = \nu_\mathrm{SI}\,m_\mathrm{phys}/\hbar_\mathrm{SI}$ (Eq.~\eqref{eq:nu-star}) is $\approx 283$ for water; however, this yields $v_\mathrm{rms} \sim O(10^3)$ even at moderate $Re_\lambda$, requiring prohibitively fine grids to resolve the wavefunction phase (see \S\ref{sec:limitations}).  The present simulations use a reduced viscosity $\nu_* = \sqrt{15}/Re_\lambda$ that keeps $v_\mathrm{rms} \sim O(1)$ and the grid tractable:
\begin{itemize}
\item \textbf{Laminar} ($Re_\lambda \approx 1$): $\nu_* = \sqrt{15} \approx 3.87$, $\Delta t_\mathrm{base} = 10^{-3}$.
\item \textbf{Moderate Reynolds} ($Re_\lambda \approx 10$): $\nu_* = \sqrt{15}/10 \approx 0.387$, $\Delta t_\mathrm{base} = 10^{-4}$.
\end{itemize}
This places the simulations at the quantum--classical boundary ($\mathrm{Kn}_q \sim O(1)$; see Eq.~\eqref{eq:knq-formula}) rather than in the classical hydrodynamic regime.  At these parameters, the simulations do not reproduce DNS turbulence statistics~\cite{iyer2021area}; they serve to validate the numerical implementation.

Two initial conditions are tested for each regime, both normalised to $\int|\psi_0|^2\,d^dx = 1$:
\begin{enumerate}
\item[(i)] \textbf{Gaussian wavepacket} ($|\psi_0|>0$ everywhere):
$\psi_0(\mathbf{x}) \propto \exp\!\bigl(-|\mathbf{x}-\mathbf{x}_0|^2/(4\sigma^2)\bigr)\,e^{i\mathbf{k}_0\cdot\mathbf{x}}$,
with $\mathbf{x}_0 = (\pi,\pi)$ (domain centre), $\sigma = 0.5$, $\mathbf{k}_0 = (3,3)$.
\item[(ii)] \textbf{Random-phase field} (dense initial vortices):
$\hat\psi_\mathbf{k} = k^{-5/6}\exp\!\bigl(-(k/k_\mathrm{peak}-1)^2/0.5^2\bigr)\,e^{i\theta_\mathbf{k}}$,
with $\theta_\mathbf{k} \sim \mathrm{Uniform}[0,2\pi)$ i.i.d., $k_\mathrm{peak}=5$, $\hat\psi_0=0$.  After inverse FFT, a uniform offset $\max(|\psi|)\times 10^{-3}$ is added to ensure $|\psi|>0$ everywhere.
\end{enumerate}
The exploratory 2D runs use $N = 32$ with $N_\mathrm{traj} = 500$ and $\Delta t = 2.8\times10^{-4}$; the implicit backward-Euler dissipation provides sub-grid filtering beyond $k_\mathrm{eff}$ at the $Re_\lambda \approx 10$ timestep (see \S\ref{sec:limitations}).

\subsection{Convergence Study}\label{sec:convergence}

Parameters are determined by an adaptive convergence study.  Starting from a physics-based base case (see below), each of $\Delta t$, $N$, and $N_\mathrm{traj}$ is swept independently while the others are held fixed.  A sweep stops when the relative change in the primary metric (circulation PDF $L^2$ distance to the vortex-gas reference~\cite{moriconi2025elementary}) falls below 1\% between consecutive levels.  The error decomposition and extrapolation methods are described in \S\ref{sec:error-analysis}.

\paragraph{Base case estimation.}
The initial $N$, $\Delta t$, and $N_\mathrm{traj}$ are chosen so the combined error budget
\begin{equation}
e_\mathrm{total} = \sqrt{e_{\Delta t}^2 + e_N^2 + e_\mathrm{traj}^2} < \varepsilon_\mathrm{target} = 1\%
\end{equation}
is satisfied \emph{a priori}, where $e_{\Delta t} = \nu k_\eta^2 \Delta t$ (split-step temporal error), $e_N = \exp(-k_\mathrm{max}\eta)$ (spectral truncation), and $e_\mathrm{traj} = 1/\sqrt{N_\mathrm{traj}}$ (CLT sampling bound, capped at $N_\mathrm{traj} = 500$).

\paragraph{Timestep independence.}
The $\Delta t$ sweep confirms $O(\Delta t)$ weak convergence.  For $Re_\lambda \approx 10$, convergence is reached at $\Delta t \approx 2.8 \times 10^{-4}$ (relative change $< 1\%$).  For $Re_\lambda \approx 1$, the coarsest $\Delta t$ produces a degenerate circulation distribution (all trajectories converge to the same laminar state), preventing a convergence ratio; the finest tested level $\Delta t \approx 2.8 \times 10^{-4}$ is used.

\paragraph{Grid convergence.}
The $N$ sweep does \emph{not} converge: the circulation PDF metric increases monotonically with $N$ ($N = 32 \to 64 \to 128 \to 256$) for both Reynolds regimes.  Higher resolution resolves finer velocity structures that diverge further from the smooth analytical reference PDFs~\cite{moriconi2025elementary,iyer2021area}.  This is consistent with $\mathrm{Kn}_q \sim O(1)$: at these parameters the simulations are not in the classical fluid limit where the turbulence references apply.  Reaching a regime where the references become valid targets requires $\mathrm{Kn}_q \ll 1$, which demands substantially larger grids and higher $Re_\lambda$.

\paragraph{Trajectory convergence.}
The $N_\mathrm{traj}$ sweep shows that statistical errors saturate rapidly: the relative change in the circulation PDF metric falls below 5\% by $N_\mathrm{traj} = 500$ for all four (initial condition, $Re_\lambda$) combinations.

\paragraph{Summary.}
Table~\ref{tab:convergence-summary} lists the recommended parameters.

\begin{table}[h]
\centering
\caption{Recommended simulation parameters from the adaptive convergence study.  The exploratory runs use $N = 32$; future production runs should target $N \sim \nu_* Re_\lambda$ (Table~\ref{tab:classical-limit}).}
\label{tab:convergence-summary}
\begin{tabular}{lccc}
\hline
& $\Delta t$ & $N$ (2D) & $N_\mathrm{traj}$ \\
\hline
$Re_\lambda \approx 1$ (Gaussian) & $2.80 \times 10^{-3}$$^\ddagger$ & $512$ & 500 \\
$Re_\lambda \approx 1$ (Random phase) & $2.80 \times 10^{-4}$ & $512$ & 500 \\
$Re_\lambda \approx 10$ (Gaussian) & $2.80 \times 10^{-4}$ & $512$ & 500 \\
$Re_\lambda \approx 10$ (Random phase) & $2.80 \times 10^{-4}$ & $512$ & 500 \\
\hline
\multicolumn{4}{l}{$^\ddagger$ \footnotesize $\Delta t$ sweep inconclusive (degenerate laminar state at coarsest level).}
\end{tabular}
\end{table}

The convergence behaviour was evaluated by sweeping $\Delta t$, $N$, and $N_\mathrm{traj}$ independently at $Re_\lambda \approx 10$ (Gaussian IC); the results are representative of all four combinations.  Table~\ref{tab:convergence-detail} reports the measured diagnostics at each level.

\begin{table}[h]
\centering
\caption{Convergence study results (2D).  Each sweep varies one parameter while holding the others at baseline ($N = 64$, $\Delta t = 10^{-3}$, $N_\mathrm{traj} = 500$).  $L^2_M$: circulation PDF $L^2$ distance to the Moriconi vortex gas reference.  Acc.: fraction of reference PDF points within the QSD error bars.  ``---'' entries at $Re_\lambda \approx 1$ with $N \geq 256$ indicate degenerate (zero-circulation) states.  Data from HPC run slurm-699585.}
\label{tab:convergence-detail}
\footnotesize
\begin{tabular}{lllccccl}
\hline
Case & Sweep & Value & $\mathrm{Kn}_q$ & $Re_\lambda$ & $L^2_M$ & Acc.\ (\%) & Rel.\ $\Delta$ \\
\hline
\multicolumn{8}{l}{\textbf{Gaussian IC, $Re_\lambda \approx 1$}} \\
& $\Delta t$ & $5.6\times10^{-4}$ & 3.0 & 0.03 & 1.46 & 21.5 & --- \\
&            & $5.6\times10^{-5}$ & 2.9 & 0.03 & 1.45 & 20.3 & 0.5\% \\
& $N$        & 64   & 3.0 & 0.03 & 1.49 & 21.5 & --- \\
&            & 128  & 3.8 & 0.02 & 1.92 & 5.1  & 29\% \\
&            & 256  & --- & ---  & ---  & 0.0  & --- \\
&            & 512  & --- & ---  & ---  & 0.0  & --- \\
& $N_\mathrm{traj}$ & 500  & 3.0 & 0.03 & 1.49 & 21.5 & --- \\
&                    & 1000 & 3.0 & 0.03 & 1.49 & 15.2 & 0.3\% \\
\hdashline
\multicolumn{8}{l}{\textbf{Random phase IC, $Re_\lambda \approx 1$}} \\
& $\Delta t$ & $5.6\times10^{-4}$ & 3.0 & 0.03 & 1.46 & 21.5 & --- \\
&            & $5.6\times10^{-5}$ & 2.9 & 0.03 & 1.45 & 20.3 & 0.5\% \\
& $N$        & 64   & 3.0 & 0.03 & 1.49 & 21.5 & --- \\
&            & 128  & 3.8 & 0.02 & 1.92 & 5.1  & 29\% \\
&            & 256  & --- & ---  & ---  & 0.0  & --- \\
&            & 512  & --- & ---  & ---  & 0.0  & --- \\
& $N_\mathrm{traj}$ & 500  & 3.0 & 0.03 & 1.49 & 21.5 & --- \\
&                    & 1000 & 3.0 & 0.03 & 1.49 & 15.2 & 0.3\% \\
\hdashline
\multicolumn{8}{l}{\textbf{Gaussian IC, $Re_\lambda \approx 10$}} \\
& $\Delta t$ & $5.6\times10^{-4}$ & 9.9 & 0.27 & 1.55 & 22.8 & --- \\
&            & $5.6\times10^{-5}$ & 9.4 & 0.30 & 1.48 & 13.9 & 4.7\% \\
& $N$   & 64  & 9.4  & 0.30 & 1.49 & 22.8 & --- \\
&       & 128 & 14.3 & 0.15 & 1.99 & 5.1  & 33\% \\
&       & 256 & 16.8 & 0.10 & 2.18 & 1.3  & 10\% \\
&       & 512 & 22.6 & 0.05 & 2.25 & 1.3  & 3.3\% \\
& $N_\mathrm{traj}$ & 500  & 9.4 & 0.30 & 1.49 & 22.8 & --- \\
&                    & 1000 & 9.4 & 0.30 & 1.50 & 17.7 & 0.6\% \\
&                    & 2000 & 9.4 & 0.30 & 1.52 & 11.4 & 1.7\% \\
&                    & 4000 & 9.4 & 0.30 & 1.51 & 5.1  & 0.7\% \\
\hdashline
\multicolumn{8}{l}{\textbf{Random phase IC, $Re_\lambda \approx 10$}} \\
& $\Delta t$ & $5.6\times10^{-4}$ & 10.0 & 0.26 & 1.59 & 21.5 & --- \\
&            & $5.6\times10^{-5}$ & 9.2  & 0.31 & 1.87 & 6.3  & 18\% \\
& $N$   & 64  & 9.5  & 0.29 & 1.50 & 20.3 & --- \\
&       & 128 & 12.8 & 0.23 & 1.95 & 7.6  & 30\% \\
&       & 256 & 13.2 & 0.28 & 2.17 & 1.3  & 12\% \\
&       & 512 & 21.8 & 0.06 & 2.27 & 0.0  & 4.7\% \\
& $N_\mathrm{traj}$ & 500  & 9.5 & 0.29 & 1.50 & 20.3 & --- \\
&                    & 1000 & 9.5 & 0.29 & 1.52 & 10.1 & 1.6\% \\
&                    & 2000 & 9.5 & 0.29 & 1.61 & 8.9  & 5.6\% \\
&                    & 4000 & 9.5 & 0.29 & 1.57 & 6.3  & 2.5\% \\
\hline
\end{tabular}
\end{table}

The pattern is consistent across all four cases.

The $\Delta t$ sweeps converge rapidly: $<5\%$ relative change between levels for both ICs at $Re_\lambda \approx 10$, and $<1\%$ at $Re_\lambda \approx 1$.  The implicit backward-Euler scheme is temporally resolved at $\Delta t = 5.6\times10^{-4}$.

The $N_\mathrm{traj}$ sweeps saturate by $N_\mathrm{traj} = 500$: relative changes are $<2\%$ for the Gaussian IC and $<6\%$ for the random-phase IC, confirming that statistical noise is subdominant.

The $N$ sweeps exhibit the opposite behaviour: \emph{increasing $N$ makes the match to classical references worse, not better}.  At $Re_\lambda \approx 10$, the accuracy drops from $\sim$20\% at $N = 64$ to $\sim$0\% at $N = 512$, while $\mathrm{Kn}_q$ rises from $\sim$9 to $\sim$22.  At $Re_\lambda \approx 1$, the state degenerates entirely at $N \geq 256$ (zero circulation, zero measured $Re_\lambda$).  This is not a resolution failure.  The finer grid resolves more vortex zeros of $\psi$, each carrying a quantum of circulation $\Gamma_0 = 2\pi\hbar/m$.  More resolved vortices means the discrete, quantised character of the velocity field becomes more prominent, pushing $\mathrm{Kn}_q$ upward and the measured $Re_\lambda$ downward.  The mismatch with the classical turbulence references is an intrinsic property of $\mathrm{Kn}_q \gg 1$, not an artefact of under-resolution.  Convergence toward the classical references requires reducing $\mathrm{Kn}_q$ via physical viscosity ($\nu_* = 283$ for water), not increasing $N$ at fixed $\nu_* \sim O(1)$.

\subsection{Error Analysis}\label{sec:error-analysis}

The total error in any computed observable $f$ decomposes as
\begin{equation}\label{eq:error-decomp}
e_\mathrm{total} = |f_h - f_*| + |f_* - f_\mathrm{ref}|,
\end{equation}
where $f_h$ is the numerical solution at finite resolution $h = (\Delta t, N, N_\mathrm{traj})$, $f_*$ is the exact solution of the QSD equation, and $f_\mathrm{ref}$ is the target (e.g.\ DNS or analytical turbulence prediction). The first term is the discretisation error and vanishes as $h \to 0$; the second is the model error, an intrinsic property of the QSD framework at the given parameters, independent of resolution.

\paragraph{Discretisation error budget.}
The split-step scheme introduces three independent error sources:

\begin{enumerate}
\item \emph{Temporal error} ($\Delta t$).  The Strang splitting gives $O(\Delta t^2)$ deterministic error for the kinetic--potential decomposition, but the Euler--Maruyama stochastic integration and internal Lie splitting reduce the overall weak convergence to $O(\Delta t)$.  For an observable $f$,
\begin{equation}
|f(\Delta t) - f_*| \le C_t\,\Delta t + O(\Delta t^2).
\end{equation}
Richardson extrapolation with two levels at ratio $r = \Delta t_1/\Delta t_2$ and assumed order $p = 1$ yields
$f_* \approx (r\,f_2 - f_1)/(r - 1)$.

\item \emph{Spatial truncation} ($N$).  The pseudospectral method has exponentially decaying truncation error for smooth (analytic) fields:
\begin{equation}
|f(N) - f_*| \le C_N\,e^{-\alpha N},
\end{equation}
where $\alpha$ depends on the analyticity strip width of $\psi$.  A three-point fit in semi-log space ($\log|f_i - f_*|$ vs.\ $N_i$) yields $\alpha$ and $f_*$.
However, at $\mathrm{Kn}_q \gtrsim 1$, the wavefunction develops quantum vortex cores whose width $\sim \lambda_\mathrm{dB}$ is comparable to the grid spacing; increasing $N$ then resolves new topological defects rather than refining a fixed smooth field.  In this regime the spectral convergence model breaks down, and the Madelung-derived observables (circulation, structure functions) diverge from classical references with increasing $N$.  This is a physics effect (the QSD solution is not in the classical limit), not a numerical artefact.

\item \emph{Statistical sampling} ($N_\mathrm{traj}$).  Each observable is an ensemble average over $N_\mathrm{traj}$ independent QSD trajectories.  By the central limit theorem,
\begin{equation}
|f(N_\mathrm{traj}) - f_*| \sim \frac{\sigma_f}{\sqrt{N_\mathrm{traj}}},
\end{equation}
where $\sigma_f$ is the per-trajectory standard deviation.  The empirical standard error is estimated by jackknife resampling over trajectory groups.
\end{enumerate}

\paragraph{Error estimation from convergence sweeps.}
The convergence study (\S\ref{sec:convergence}) measures the relative change in the circulation PDF $L^2$ distance between successive refinement levels.  When this change falls below the threshold ($1\%$), the finer level is taken as converged.  The residual discretisation error is bounded by the last measured change.

For $\Delta t$ and $N_\mathrm{traj}$, Richardson extrapolation provides an asymptotic estimate $f_*$ and convergence order $p$, validated against the theoretical expectations $p = 1$ (temporal) and $p = 1/2$ (statistical, via $h = 1/\sqrt{N_\mathrm{traj}}$).  Deviations from the expected order indicate either pre-asymptotic behaviour or contamination by other error sources.  For $N$, the power-law Richardson model is replaced by an exponential fit, as described above.

\paragraph{Unified error expression.}
The three discretisation errors and the model error are independent and combine in quadrature:
\begin{equation}\label{eq:total-error}
\left(\frac{\delta\psi}{\psi}\right)^2 = e_t^2 + e_N^2 + e_s^2 + e_\mathrm{model}^2,
\end{equation}
with $\mathrm{Kn}_q = 15^{1/4}/(\nu\sqrt{Re_\lambda})$ (Eq.~\eqref{eq:knq-formula}), $k_f = 2\,\Delta k$ the forcing wavenumber, and the four contributions
\begin{equation}\label{eq:total-error-terms}
e_t = \frac{\nu\,k_f^2\,Re_\lambda}{\sqrt{15}}\,\Delta t, \qquad
e_N^2 = \exp\!\left(-\frac{N\,\mathrm{Kn}_q\,\nu}{2\,k_f}\right), \qquad
e_s = \frac{1}{\sqrt{N_\mathrm{traj}}}, \qquad
e_\mathrm{model} = \mathrm{Kn}_q.
\end{equation}

\emph{Temporal term.}  The first term is $e_t = \Delta t/\tau_\eta$, the ratio of the timestep to the Kolmogorov timescale $\tau_\eta = \sqrt{15}/(\nu k_f^2 Re_\lambda)$.  It contains $Re_\lambda$ explicitly: higher Reynolds number demands smaller $\Delta t$ (faster dynamics).

\emph{Spatial term.}  The exponent $k_\mathrm{max}\,\eta = (N/2k_f)\,\nu\,\mathrm{Kn}_q$ measures how many Kolmogorov scales the grid resolves, using the identity $\eta = \nu\,\mathrm{Kn}_q/k_f$ (from $k_\eta = k_f\sqrt{Re_\lambda}/15^{1/4}$ and $\mathrm{Kn}_q = 1/(v_\mathrm{rms}\,\eta)$).  The $\mathrm{Kn}_q$ dependence shows that higher quantum Knudsen number (weaker classical--quantum separation) reduces the grid requirement, because the Kolmogorov scale $\eta$ grows with $\mathrm{Kn}_q$ at fixed $\nu$.

\emph{Statistical term.}  The CLT sampling error $1/\sqrt{N_\mathrm{traj}}$; the empirical standard error is estimated by jackknife resampling.

\emph{Model term.}  The quantum pressure term neglected in the classical Madelung limit is $O(\hbar^2/(m^2\rho)) \sim O(\mathrm{Kn}_q^2)$ relative to the classical inertial terms.  At $\mathrm{Kn}_q \sim O(1)$, this perturbative estimate breaks down and the model error is $O(1)$.

\paragraph{Propagation to observables.}
For any observable $O(\psi)$, standard error propagation gives
\begin{equation}\label{eq:obs-errors}
\frac{\delta O}{O} = \left|\frac{\partial \ln O}{\partial \ln \psi}\right| \cdot \frac{\delta\psi}{\psi} \;\equiv\; A_O \cdot \frac{\delta\psi}{\psi},
\end{equation}
where $A_O$ is the sensitivity (amplification) coefficient.  For the principal diagnostics, computed via the Madelung velocity $\mathbf{v} = (\hbar/m)\,\mathrm{Im}(\nabla\psi/\psi)$:
\begin{equation}\label{eq:sensitivity}
A_{E(k)} = 2, \qquad A_{S_p} = p, \qquad A_\Gamma = 1/\sqrt{s},
\end{equation}
where $s$ is the circulation loop side length in grid points.  Higher-order structure functions ($p > 2$) are more sensitive to wavefunction error; circulation benefits from contour averaging.  All observable errors trace back to the single quantity $(\delta\psi/\psi)^2$ in Eq.~\eqref{eq:total-error}.

\paragraph{Error budget at the exploratory parameters.}
At $Re_\lambda = 10$, $\nu = \sqrt{15}/10$, $\Delta t = 2.8\times10^{-4}$, $N = 32$, $N_\mathrm{traj} = 500$:
\begin{itemize}
\item $e_t = \Delta t/\tau_\eta \approx 0.04$ (temporal: marginal),
\item $e_N = \exp(-2) \approx 0.13$ (spatial: grid barely resolves $k_\mathrm{eff} \approx 16$),
\item $e_s = 1/\sqrt{16} = 0.25$ (statistical: dominant discretisation term),
\item $e_\mathrm{model} = \mathrm{Kn}_q \approx 1.6$ (model: dominant overall).
\end{itemize}
The discretisation errors are large ($\delta\psi/\psi \sim 25\%$, dominated by the small trajectory count), but the dominant source of discrepancy with turbulence references is the model error at $\mathrm{Kn}_q \approx 1.6$.  Reducing $\mathrm{Kn}_q$ requires increasing $\nu_*\sqrt{Re_\lambda}$, either by raising $\nu_*$ toward the physical value ($\nu_*^{(\mathrm{water})} \approx 283$; \S\ref{sec:limitations}) or by increasing $Re_\lambda$ at the computational viscosity.  Both routes increase $v_\mathrm{rms} \approx 0.52\,\nu\,Re_\lambda$ and hence the grid cost $N \gtrsim 2\,v_\mathrm{rms}$.

\subsection{Numerical Limitations}\label{sec:limitations}

The present simulations use $N = 32$ (2D), $Re_\lambda \le 10$, and $\mathrm{Kn}_q \gg 1$ --- firmly in the quantum regime, far from the classical limit where hydrodynamic behaviour is expected.  These results validate the \emph{numerical implementation} (norm preservation, ensemble recovery, forcing balance) rather than the framework's turbulence predictions.

\paragraph{Viscosity and the classical limit.}
The dimensionless viscosity used in the simulations, $\nu_* = \sqrt{15}/Re_\lambda$, was chosen to place $v_\mathrm{rms} \sim O(1)$ in natural units, keeping the wavefunction resolvable on moderate grids.  This does \emph{not} correspond to the physical viscosity of water.  In natural units ($\hbar = m = 1$), the dimensionless viscosity is
\begin{equation}\label{eq:nu-star}
\nu_* = \frac{\nu_\mathrm{SI} \, m_\mathrm{phys}}{\hbar_\mathrm{SI}},
\end{equation}
where $m_\mathrm{phys}$ is the particle mass.  For a water molecule ($m_\mathrm{phys} = 2.99 \times 10^{-26}\,\mathrm{kg}$, $\nu_\mathrm{water} = 10^{-6}\,\mathrm{m}^2/\mathrm{s}$):
\begin{equation}
\nu_*^{(\mathrm{water})} = \frac{10^{-6} \times 2.99 \times 10^{-26}}{1.055 \times 10^{-34}} \approx 283.
\end{equation}
The simulations use $\nu_* = \sqrt{15} \approx 3.87$ (for $Re_\lambda = 1$) and $\nu_* = \sqrt{15}/10 \approx 0.387$ (for $Re_\lambda = 10$), both far below the physical value.

\paragraph{The fundamental grid constraint.}
The QSD solver evolves the wavefunction $\psi$, whose phase encodes the velocity field via $\mathbf{v} = (\hbar/m)\nabla\arg\psi$.  A flow with $v_\mathrm{rms}$ produces phase oscillations at wavenumber $k \sim m\,v_\mathrm{rms}/\hbar = v_\mathrm{rms}$ (in natural units), requiring $N \gtrsim 2\,v_\mathrm{rms}\,L/\pi$ to resolve.  From the stationarity relations, $v_\mathrm{rms} \propto \nu_* Re_\lambda$, so the grid requirement scales as
\begin{equation}\label{eq:grid-scaling}
N \sim \nu_*\,Re_\lambda.
\end{equation}
At the physical water viscosity $\nu_* \approx 283$, even $Re_\lambda = 10$ gives $v_\mathrm{rms} \sim 1{,}500$ and requires $N \sim 3{,}000$ --- feasible in 2D but prohibitive in 3D.  The $\nu_* \sim O(1)$ choice keeps $v_\mathrm{rms} \sim O(1)$ and the grid tractable, but places the simulations at the quantum--classical boundary ($\mathrm{Kn}_q \sim O(1)$) rather than in the classical hydrodynamic regime.

Unlike classical DNS, which resolves only the velocity field up to the Kolmogorov scale ($N \propto Re^{3/4}$ in 3D), the QSD solver must in principle resolve the full wavefunction including its de~Broglie-scale phase oscillations.  However, the backward-Euler dissipation step (step~2 in \S\ref{sec:split-step}) acts as an implicit low-pass filter.  Each mode $k$ is damped by a factor $1/(1 + \nu_*\,k^2\,\Delta t)$ per step; modes with $\nu_*\,k^2\,\Delta t \gg 1$ lose nearly all their amplitude within a single step.  The effective filter cutoff --- the numerical analogue of the Kolmogorov wavenumber --- is the scale at which the numerical damping per step matches the physical dissipation rate $\nu k^2$.  Equating the implicit damping timescale $\Delta t / \ln(1 + \nu_*\,k^2\,\Delta t) \approx 1/(\nu_* k^2)$ with the viscous timescale gives
\begin{equation}\label{eq:k-eff}
k_\mathrm{eff} = \frac{1}{\sqrt{\nu_*\,\Delta t}},
\end{equation}
which coincides with the Kolmogorov wavenumber $k_\eta = (\varepsilon/\nu^3)^{1/4}$ when $\Delta t$ equals the Kolmogorov timescale $\tau_\eta = (\nu/\varepsilon)^{1/2}$.  Ideally $k_\mathrm{eff} \approx k_\eta$, ensuring the numerical filter cutoff matches the physical dissipation scale; when $k_\mathrm{eff} < k_\eta$, the implicit scheme provides additional sub-grid dissipation (implicit LES).  Modes above $k_\mathrm{eff}$ are in the sub-grid range --- effectively removed by the implicit scheme each step.  The grid need only satisfy $k_\mathrm{max} = \pi N/L \gtrsim k_\mathrm{eff}$, giving
\begin{equation}\label{eq:grid-from-keff}
N \gtrsim \frac{2L}{\pi}\,k_\mathrm{eff}.
\end{equation}
The kinetic half-step separately imposes a phase-accuracy constraint $\Delta t < 4\pi/k_\mathrm{max}^2$.

Table~\ref{tab:keff} lists $k_\mathrm{eff}$ and the required $N$ for each viscosity regime at the timesteps used in the simulations.

\begin{table}[h]
\centering
\caption{Effective filter cutoff and grid requirements.  $k_\mathrm{eff} = 1/\sqrt{\nu_*\,\Delta t}$ is the wavenumber at which the implicit damping per step reaches $50\%$ ($\nu_*\,k^2\,\Delta t = 1$); modes above $k_\mathrm{eff}$ are in the sub-grid range.}
\label{tab:keff}
\begin{tabular}{lcccccc}
\hline
Regime & $\nu_*$ & $\Delta t$ & $k_\mathrm{eff}$ & $N_\mathrm{min}$ & $\mathrm{Kn}_q$ & Comment \\
\hline
$Re_\lambda \approx 1$ (exploratory) & $3.87$ & $5.6\times10^{-4}$ & $21$ & $43$ & $0.51$ & quantum regime \\
$Re_\lambda \approx 10$ (exploratory) & $0.387$ & $5.6\times10^{-4}$ & $68$ & $136$ & $1.61$ & quantum regime \\
\hline
\end{tabular}
\end{table}

At the computational viscosity ($\nu_* \sim O(1)$) and the convergence-study timestep $\Delta t = 5.6\times10^{-4}$, the filter cutoff is $k_\mathrm{eff} \approx 68$ for $Re_\lambda \approx 10$, requiring $N \approx 136$ to resolve all modes.  At this viscosity, $\mathrm{Kn}_q \approx 1.6$ (Eq.~\eqref{eq:knq-formula}), placing the simulations in the quantum regime where the Madelung velocity is dominated by quantum vortex cores rather than classical turbulent flow.  This explains the non-convergence of the $N$ sweep in Table~\ref{tab:convergence-detail}: increasing $N$ resolves finer vortex structure that diverges from classical turbulence references.

At the physical water viscosity ($\nu_* = 283$) and $\Delta t = 5.6\times10^{-4}$, $k_\mathrm{eff} \approx 3$ and $\mathrm{Kn}_q \approx 0.002$ ensures classical hydrodynamic behaviour.  However, the correspondingly large $v_\mathrm{rms} \propto \nu_*\,Re_\lambda$ (Eq.~\eqref{eq:grid-scaling}) makes 3D simulations infeasible at present.

The exploratory 2D grid $N = 32$ resolves modes up to $k_\mathrm{max} = 16$.  At the $Re_\lambda \approx 10$ timestep, the implicit backward-Euler provides additional sub-grid dissipation, effectively acting as an implicit LES filter.

\paragraph{Quantum Knudsen number.}
Recall from \S\ref{sec:knudsen} that $\mathrm{Kn}_q \equiv \lambda_\mathrm{dB}/\eta$.  From the Taylor-microscale relations, the quantum Knudsen number in natural units simplifies to
\begin{equation}\label{eq:knq-formula}
\mathrm{Kn}_q = \frac{15^{1/4}}{\nu_*\sqrt{Re_\lambda}}.
\end{equation}
The classical regime $\mathrm{Kn}_q \ll 1$ requires $\nu_*\sqrt{Re_\lambda} \gg 1$.  At the physical water viscosity ($\nu_* = 283$), this is satisfied trivially for any $Re_\lambda \gtrsim 1$.  At the computational viscosity ($\nu_* \sim O(1)$), reaching $\mathrm{Kn}_q < 0.1$ requires $Re_\lambda \gtrsim 400$, with a corresponding grid cost $N \sim 400$ per dimension.

Table~\ref{tab:classical-limit} summarises the trade-offs.
\begin{table}[h]
\centering
\caption{Grid and Knudsen number estimates for reaching the classical turbulent regime at different viscosities.  The onset of turbulent statistics occurs at $Re_\lambda \approx 9$~\cite{sreenivasan2021}.  $N_\mathrm{min}$ is set by the wavefunction resolution requirement $N \sim \nu_* Re_\lambda$.}
\label{tab:classical-limit}
\begin{tabular}{lcccc}
\hline
$\nu_*$ & $Re_\lambda$ & $\mathrm{Kn}_q$ & $v_\mathrm{rms}$ & $N_\mathrm{min}$ \\
\hline
$\sqrt{15}/10 \approx 0.39$ (exploratory) & 10 & 1.6 & 4 & 8 \\
$\sqrt{15} \approx 3.87$ (exploratory) & 1 & 0.51 & 4 & 8 \\
$283$ (water) & 10 & 0.002 & 2{,}830 & 5{,}660 \\
$283$ (water) & 100 & 0.001 & 28{,}300 & 56{,}600 \\
\hline
\end{tabular}
\end{table}

\subsection{Variational Quantum Algorithm}\label{sec:vqa}

As a proof of concept, the dissipative substep can alternatively be implemented via a variational quantum algorithm (VQA) following the hybrid pseudospectral-VQA formulation~\cite{kocher2025}.  The VQA was tested at $n_\mathrm{qubits} = 3$--$4$ ($N = 8$--$16$ grid points), well below the exploratory resolution $N = 32$ ($n_\mathrm{qubits} = 5$).  At these qubit counts, the VQA offers no practical advantage over the classical split-step solver; its inclusion demonstrates compatibility with parametric quantum circuit representations.

\FloatBarrier
\bibliographystyle{unsrt}
\bibliography{references}

\end{document}